\newtheorem{theorem}{Theorem}[section]
\newtheorem{lemma}[theorem]{Lemma}
\newtheorem{corollary}[theorem]{Corollary}
\newtheorem{proposition}[theorem]{Proposition}
\newtheorem{fact}[theorem]{Fact}
\newtheorem{definition}[theorem]{Definition}
\newtheorem{remark}{Remark}[section]
\newcommand{\braket}[2]{\langle #1 | #2 \rangle}
\DeclarePairedDelimiter\rbra{\lparen}{\rparen}
\DeclarePairedDelimiter\sbra{\lbrack}{\rbrack}
\DeclarePairedDelimiter\cbra{\{}{\}}
\DeclarePairedDelimiter\abs{\lvert}{\rvert}
\DeclarePairedDelimiter\Abs{\lVert}{\rVert}
\DeclarePairedDelimiter\floor{\lfloor}{\rfloor}
\DeclarePairedDelimiter\ket{\lvert}{\rangle}
\DeclarePairedDelimiter\bra{\langle}{\rvert}
\DeclarePairedDelimiter\ave{\langle}{\rangle}
\newcommand{\tr} {\operatorname{tr}}
\newcommand{\spanspace} {\operatorname{span}}
\newcommand{\ketbra}[2]{\ensuremath{\ket{#1}\!\bra{#2}}}
\newcommand{\footremember}[2]{%
    \footnote{#2}
    \newcounter{#1}
    \setcounter{#1}{\value{footnote}}%
}
\begin{document}
    \title{Sample-Optimal Quantum Estimators for \\ Pure-State Trace Distance and Fidelity via Samplizer}
\author{Qisheng Wang \footremember{1}{Qisheng Wang is with the School of Computer Science, Shanghai Jiao Tong University, Shanghai 200240, China (e-mail: \url{QishengWang1994@gmail.com}).}
\and Zhicheng Zhang \footremember{2}{Zhicheng Zhang is with the School of Information and Communication Technology, Griffith University, Brisbane, QLD 4111, Australia (e-mail: \url{iszczhang@gmail.com}). Part of the work was done when the author was with the
Centre for Quantum Software and Information, University of Technology Sydney, Australia.}}
        \date{}
        \maketitle

    \begin{abstract}
        We settle the problem of estimating the trace distance and (square root) fidelity between $n$-qubit pure quantum states to within additive error $\varepsilon$, given their independent samples, which was raised as an open question by \hyperlink{cite.Wan24}{Wang (\textit{IEEE Trans.\ Inf.\ Theory} 2024)}.
        This is achieved by a quantum algorithm with \textit{optimal} sample complexity $\Theta(1/\varepsilon^2)$, improving the long-standing folklore with sample complexity $O(1/\varepsilon^4)$.
        At the heart of our algorithm is a samplized phase estimation of the product of two Householder reflections.
        This is realized by an improved \mbox{(multi-)}samplizer for pure states, through which any quantum query algorithm using $Q$ queries to the reflection operator $I - 2\ket{\psi}\!\bra{\psi}$
        can be converted to a $\delta$-close (in the diamond norm distance) quantum sample algorithm using $\Theta(Q^2/\delta)$ samples of the state $\ket{\psi}$. 
        This samplizer for pure states is also shown to be optimal.
    \end{abstract}

    \textbf{Keywords: quantum algorithms, sample complexity, trace distance, fidelity, pure states, lower bounds, samplizer.}

    \newpage
    \tableofcontents
    \newpage

    \section{Introduction}

    Trace distance and fidelity are the two most commonly used measures of the closeness between quantum states in many fundamental problems of quantum computation and quantum information (cf.\ \cite{NC10}), e.g., quantum state discrimination \cite{Che00,BC09,BK15}, certification \cite{BOW19}, and tomography \cite{HHJ+17,OW16}. 
    The trace distance between two mixed quantum states $\rho$ and $\sigma$ is defined by
    \begin{equation}
        \mathrm{T}\rbra{\rho, \sigma} \coloneqq \frac{1}{2}\tr\rbra*{\abs[\big]{\rho - \sigma}}.
    \end{equation}
    The (square root) fidelity between $\rho$ and $\sigma$ is defined by
    \begin{equation}
        \mathrm{F}\rbra{\rho, \sigma} \coloneqq \tr\rbra*{\sqrt{\sqrt{\sigma}\rho\sqrt{\sigma}}}.
    \end{equation}
    To clarify, the squared fidelity $\mathrm{F}^2\rbra{\rho, \sigma} \coloneqq \rbra[\big]{\mathrm{F}\rbra{\rho, \sigma}}^2$ is also a closeness measure commonly used in the literature. 
    As a quantum analog of estimating the closeness (e.g., total variation distance) between probability distributions \cite{VV17}, estimating the trace distance and (square root/squared) fidelity between quantum states turns out to be a basic problem of broad interest. 

\begin{figure} [!htp]
\centering
\begin{quantikz} [row sep = {20pt, between origins}]
    \lstick{$\ket{0}$} & \gate{H} & \ctrl{1} & \gate{H} & \meter{} \\
    \lstick{$\ket{\varphi}$} & & \swap{1} & & \\
    \lstick{$\ket{\psi}$} & & \targX{} & &
\end{quantikz}
\caption{The SWAP test.}
\label{fig:swap}
\end{figure}

    One of the earliest approach to this problem is the SWAP test \cite{BBD+97,BCWdW01}, as shown in \cref{fig:swap}. 
    On input two pure quantum states $\ket{\varphi}$ and $\ket{\psi}$, the measurement outcome of the SWAP test will be $0$ with probability $\frac{1}{2} + \frac{1}{2}\mathrm{F}^2\rbra{\ket{\varphi}, \ket{\psi}}$,\footnote{If we replace the pure quantum states $\ket{\varphi}$ and $\ket{\psi}$ with mixed quantum states $\rho$ and $\sigma$, respectively, then the probability of measurement outcome $0$ will be $\frac{1}{2}+\frac{1}{2}\tr\rbra{\rho\sigma}$ \cite[Proposition 9]{KMY09}.} where 
    \begin{equation}
        \mathrm{F}^2\rbra{\ket{\varphi}, \ket{\psi}} = \abs{\braket{\varphi}{\psi}}^2.
    \end{equation}
    An $\varepsilon$-estimate (in terms of additive error) of $\mathrm{F}^2\rbra{\ket{\varphi}, \ket{\psi}}$ can be obtained with high probability by repeating the SWAP test $O\rbra{1/\varepsilon^2}$ times, which uses $O\rbra{1/\varepsilon^2}$ samples of $\ket{\varphi}$ and $\ket{\psi}$.
    Through this approach, $\varepsilon$-estimates of their trace distance and square root fidelity can be obtained by using $O\rbra{1/\varepsilon^4}$ samples,\footnote{This folklore approach was noted in \cite[Appendix A]{Wan24}.} according to the following identities 
    \begin{equation}
        \mathrm{T}\rbra{\ket{\varphi}, \ket{\psi}} = \sqrt{1 - \mathrm{F}^2\rbra{\ket{\varphi}, \ket{\psi}}}, \qquad \mathrm{F}\rbra{\ket{\varphi}, \ket{\psi}} = \sqrt{\mathrm{F}^2\rbra{\ket{\varphi}, \ket{\psi}}}.
    \end{equation}

Since the discovery of the SWAP test, there has been a series of work on testing/estimating the closeness of quantum states (see \cref{sec:relatedwork} for a comprehensive review).
However, only a few special pure-state cases are known to have optimal algorithms. 
In \cite{ALL22}, it was shown that $\varepsilon$-estimating $\mathrm{F}^2\rbra{\ket{\varphi}, \ket{\psi}}$ requires $\Omega\rbra{1/\varepsilon^2}$ samples of 
$\ket{\varphi}$ and $\ket{\psi}$, thereby implying that the quantum circuit for the SWAP test in \cref{fig:swap} is the best (up to a constant factor) solution to this task that we can hope for; 
equipped with quantum amplitude estimation \cite{BHMT02}, the SWAP test further implies an optimal quantum query algorithm for $\varepsilon$-estimating $\mathrm{F}^2\rbra{\ket{\varphi}, \ket{\psi}}$ by using $O\rbra{1/\varepsilon}$ queries to the state-preparation circuits of $\ket{\varphi}$ and $\ket{\psi}$, where the matching query complexity lower bound is due to \cite{BBC+01,NW99}.
Given these results, one might presume that the SWAP test would also work best for estimating similar quantities such as $\mathrm{T}\rbra{\ket{\varphi}, \ket{\psi}}$ and $\mathrm{F}\rbra{\ket{\varphi}, \ket{\psi}}$. 
A piece of evidence against this is a recent quantum query algorithm proposed in \cite{Wan24} for $\varepsilon$-estimating $\mathrm{T}\rbra{\ket{\varphi}, \ket{\psi}}$ and $\mathrm{F}\rbra{\ket{\varphi}, \ket{\psi}}$ with optimal query complexity $\Theta\rbra{1/\varepsilon}$, which is not based on the SWAP test. 
Nonetheless, the sample complexity of estimating $\mathrm{T}\rbra{\ket{\varphi}, \ket{\psi}}$ and $\mathrm{F}\rbra{\ket{\varphi}, \ket{\psi}}$ remains unsolved. 
The prior best sample complexity upper bound is $O\rbra{1/\varepsilon^4}$ due to the folklore approach just mentioned, while the prior best sample complexity lower bound is only $\Omega\rbra{1/\varepsilon^2}$ as noted in \cite{Wan24}.
These naturally raise a question: 
\begin{align*}
    & \textit{Can we improve the folklore sample complexity $O\rbra{1/\varepsilon^4}$} \\
    & \qquad \qquad \qquad \qquad \qquad \qquad \qquad \qquad \qquad \textit{for estimating $\mathrm{T}\rbra{\ket{\varphi}, \ket{\psi}}$ and $\mathrm{F}\rbra{\ket{\varphi}, \ket{\psi}}$?}
\end{align*}
Surprisingly, we give a positive answer to this question by providing a novel quantum algorithm with \textit{optimal} sample complexity $\Theta\rbra{1/\varepsilon^2}$ for $\varepsilon$-estimating $\mathrm{T}\rbra{\ket{\varphi}, \ket{\psi}}$ and $\mathrm{F}\rbra{\ket{\varphi}, \ket{\psi}}$, matching the lower bound noted in \cite{Wan24}.

We formally state our main result as follows. 

\begin{theorem} [Optimal quantum estimators for pure-state trace distance and square root fidelity, \cref{thm:algo-analysis} restated] \label{thm:main}
    Given sample access to two $n$-qubit pure states $\ket{\varphi}$ and $\ket{\psi}$,  $\mathrm{T}\rbra{\ket{\varphi}, \ket{\psi}}$ and $\mathrm{F}\rbra{\ket{\varphi}, \ket{\psi}}$ can be $\varepsilon$-estimated on a quantum computer with sample complexity $O\rbra{1/\varepsilon^2}$.
\end{theorem}

We summarize the complexity of pure-state trace distance and fidelity estimations in \cref{tab:cmp}. 

\begin{table*}[!htp]
\centering
\caption{Quantum complexity for pure-state trace distance and fidelity estimations.}
\label{tab:cmp}
\adjustbox{max width=\textwidth}{
\begin{tabular}{cccc}
\toprule
Complexity Type             & Trace Distance                                                                                                      & Square Root Fidelity                                                                                                      & Squared Fidelity                                    \\ \midrule
Sample Upper Bound & \begin{tabular}[c]{@{}c@{}}$O(1/\varepsilon^4)$ folklore\\ \textbf{$O(1/\varepsilon^2)$ this work} \end{tabular}                                                                     & \begin{tabular}[c]{@{}c@{}}$O(1/\varepsilon^4)$ folklore\\ \textbf{$O(1/\varepsilon^2)$ this work} \end{tabular}                                                                     & \begin{tabular}[c]{@{}c@{}} $O(1/\varepsilon^2)$ \\ \cite{BCWdW01} \end{tabular}   \\ \midrule
Sample Lower Bound & \begin{tabular}[c]{@{}c@{}} $\Omega(1/\varepsilon^2)$ \\ \cite{Wan24} \end{tabular} & 
\begin{tabular}[c]{@{}c@{}} $\Omega(1/\varepsilon^2)$ \\ \cite{ALL22,Wan24} \end{tabular} & \begin{tabular}[c]{@{}c@{}} $\Omega(1/\varepsilon^2)$ \\ \cite{ALL22} \end{tabular} \\ \midrule
Query Upper Bound  & 
\begin{tabular}[c]{@{}c@{}} $O(1/\varepsilon)$ \\ \cite{Wan24} \end{tabular} & 
\begin{tabular}[c]{@{}c@{}} $O(1/\varepsilon)$ \\ \cite{Wan24} \end{tabular}
& 
\begin{tabular}[c]{@{}c@{}} $O(1/\varepsilon)$ \\ \cite{BCWdW01,BHMT02} \end{tabular}
\\ \midrule
Query Lower Bound  & 
\begin{tabular}[c]{@{}c@{}} $\Omega(1/\varepsilon)$ \\ \cite{Wan24} \end{tabular} & 
\begin{tabular}[c]{@{}c@{}} $\Omega(1/\varepsilon)$ \\ \cite{BBC+01,NW99,Wan24} \end{tabular} & 
\begin{tabular}[c]{@{}c@{}} $\Omega(1/\varepsilon)$ \\ \cite{BBC+01,NW99} \end{tabular} \\ 
 \bottomrule
\end{tabular}
}
\end{table*}

Our quantum algorithm in \cref{thm:main} is quite different from the quantum query algorithm proposed in \cite{Wan24} and those under the framework of the SWAP test. Its key component is a \textit{samplized} phase estimation of the product of two Householder reflections.
To this end, we also extend the \textit{samplizer} (for mixed states) employed in \cite{WZ24} to \textit{(multi-)samplizer for pure states}, which is an algorithmic tool of independent interest.

\paragraph{Organization of this paper.}
We will introduce the techniques in our quantum algorithm and the optimality of our (multi-)samplizer for pure states in \cref{sec:overview}, and give a brief review of related work in \cref{sec:relatedwork}. 
Preliminaries are given in \cref{sec:preliminary}. 
A meta-algorithm using Householder reflections for estimating the trace distance and fidelity will be presented in \cref{sec:est-refl}. 
The samplizer for pure states will be defined and implemented in \cref{sec:samplizer}.
The complete algorithm will be presented in \cref{sec:algo}.
Optimality and lower bounds will be shown in \cref{sec:optimal}.

    \section{Techniques} \label{sec:overview}

    In this section, we introduce the observations and techniques for achieving our results. 
    We first introduce our quantum algorithm in \cref{sec:property-reflection,sec:overview-samplizer}, and then discuss the lower bounds and optimality in \cref{sec:overview-optimal}. 

    \subsection{Properties of the product of Householder reflections} \label{sec:property-reflection}

    The first property of trace distance and square root fidelity we need to notice is that for any two pure quantum states $\ket{\varphi}$ and $\ket{\psi}$, it always holds that 
    \begin{equation} \label{eq:tf}
        \rbra[\big]{\mathrm{T}\rbra{\ket{\varphi}, \ket{\psi}}}^2 + \rbra[\big]{\mathrm{F}\rbra{\ket{\varphi}, \ket{\psi}}}^2 = 1.
    \end{equation}
    \cref{eq:tf} allows us to represent the trace distance and the square root fidelity through trigonometric functions, i.e., one can assume that $\mathrm{T}\rbra{\ket{\varphi}, \ket{\psi}} = \cos\rbra{\gamma}$ and $\mathrm{F}\rbra{\ket{\varphi}, \ket{\psi}} = \sin\rbra{\gamma}$ for some real number $\gamma \in \sbra{0, \frac{\pi}{2}}$.
    This suggests us to estimate $\mathrm{T}\rbra{\ket{\varphi}, \ket{\psi}}$ and $\mathrm{F}\rbra{\ket{\varphi}, \ket{\psi}}$ at the same time by estimating the value of $\gamma$. 
    To this end, our major observation is that $\gamma$ appears to be encoded in the product of the reflection operators about the vectors $\ket{\varphi}$ and $\ket{\psi}$. 

    For convenience, suppose that $\ket{\varphi}$ and $\ket{\psi}$ are two distinguishable pure quantum states, i.e., $\mathrm{T}\rbra{\ket{\varphi}, \ket{\psi}} \neq 0$. 
    Let $R_{\varphi} = I - 2\ketbra{\varphi}{\varphi}$ and $R_{\psi} = I - 2\ketbra{\psi}{\psi}$ be the reflection operators about $\ket{\varphi}$ and $\ket{\psi}$, respectively. 
    Our key observation (\cref{lemma:eigenvector-reflection}) is that the pure quantum state $\ket{\varphi}$ lies in the subspace spanned by two eigenvectors $\ket{\Phi_{\pm}}$ of $R_{\varphi} R_{\psi}$ with eigenvalues $-e^{\mp i2\gamma}$, respectively.
    Specifically, $\ket{\varphi}$ can be represented as
    \begin{equation} \label{eq:def-gamma}
        \ket{\varphi} = \frac{1}{\sqrt{2}} \rbra[\big]{\ket{\Phi_+} + \ket{\Phi_-}}, \text{ where } R_{\varphi} R_{\psi} \ket{\Phi_{\pm}} = -e^{\mp i2\gamma} \ket{\Phi_{\pm}}.
    \end{equation}

    According to the above observation, if we have quantum query access to the two reflection operators $R_{\varphi}$ and $R_{\psi}$, then we can obtain an $\varepsilon$-estimate of the value $\gamma$, denoted by $\tilde \gamma$, via the quantum phase estimation (cf.\ {\cite[Section 5.2]{NC10}}) of $R_{\varphi}R_{\psi}$ on the pure quantum state $\ket{\varphi}$, by using $O\rbra{1/\varepsilon}$ queries to (controlled-)$R_{\varphi}$ and (controlled-)$R_{\psi}$, despite that $\ket{\varphi}$ is not an eigenvector of $R_{\varphi}R_{\psi}$.
    Finally, $\Theta\rbra{\varepsilon}$-estimates of trace distance and square root fidelity are given by $\mathrm{T}\rbra{\ket{\varphi}, \ket{\psi}} \approx \cos\rbra{\tilde \gamma}$ and $\mathrm{F}\rbra{\ket{\varphi}, \ket{\psi}} \approx \sin\rbra{\tilde \gamma}$ (see \cref{lemma:meta} for details).
    For illustration, the quantum circuit for this specific quantum phase estimation is given in \cref{fig:meta}.

    \begin{figure} [!htp]
\centering
\begin{quantikz}
    \lstick{$\ket{0}$} & \gate{H} & & & & \dots \dots & \ctrl{5} & \gate[5]{\mathrm{QFT}_t^\dag} & \meter{} \\
    \lstick{$\vdots$}  \\
    \lstick{$\ket{0}$} & \gate{H} & & & \ctrl{3} & \dots \dots & & & \meter{} \\
    \lstick{$\ket{0}$} & \gate{H} & & \ctrl{2} & & \dots \dots & & & \meter{} \\
    \lstick{$\ket{0}$} & \gate{H} & \ctrl{1} & & & \dots \dots & & & \meter{} \\
    \lstick{$\ket{\varphi}$} & & \gate{\rbra*{R_{\varphi}R_{\psi}}^{2^0}} & \gate{\rbra*{R_{\varphi}R_{\psi}}^{2^1}} & \gate{\rbra*{R_{\varphi}R_{\psi}}^{2^2}} & \dots \dots & \gate{\rbra*{R_{\varphi}R_{\psi}}^{2^{t-1}}} &
\end{quantikz}
\caption{Phase estimation of $R_\varphi R_\psi$ on $\ket{\varphi}$.}
\label{fig:meta}
\end{figure}

    \subsection{Multi-samplizer for pure states} \label{sec:overview-samplizer}

    The \textit{samplizer} employed in \cite{WZ24} is a tool that allows us to convert quantum query algorithms to quantum sample algorithms (the latter use samples of quantum states as input). 
    The algorithm outlined in \cref{fig:meta} can be treated as a quantum query algorithm with the two reflection operators $R_{\varphi}$ and $R_{\psi}$ being the quantum query oracles.
    Now to convert this quantum query algorithm to a quantum sample algorithm using samples of $\ket{\varphi}$ and $\ket{\psi}$,
    we define the notion of \textit{multi-samplizer for pure states}, extending the samplizer in~\cite{WZ24}.
    \begin{definition} [Multi-samplizer for pure states, \cref{def:samplizer-pure} restated] \label{def:samplizer-intro}
        A $k$-samplizer for $n$-qubit pure states, denoted as $\mathsf{Samplize}^{\mathsf{pure}}_{*}\ave{*}$, is a converter such that:
        for any quantum query algorithm $\mathcal{A}^{U_1, U_2, \dots, U_k}$ with $n$-qubit quantum unitary oracles $U_1, U_2, \dots, U_k$ and any $n$-qubit pure states $\ket{\psi_1}, \ket{\psi_2}, \dots, \ket{\psi_k}$, the quantum circuit instance $\mathcal{A}^{R_{\psi_1}, R_{\psi_2}, \dots, R_{\psi_k}}$ can be implemented to precision $\delta$ in diamond norm by using samples of $\ket{\psi_1}, \ket{\psi_2}, \dots, \ket{\psi_k}$, i.e.,
        \begin{equation}
            \Abs*{ \mathsf{Samplize}^{\mathsf{pure}}_{\delta}\ave[\big]{\mathcal{A}^{U_1, U_2, \dots, U_k}}\sbra[\big]{\ket{\psi_1}, \ket{\psi_2}, \dots, \ket{\psi_k}} - \mathcal{A}^{R_{\psi_1}, R_{\psi_2}, \dots, R_{\psi_k}} }_\diamond \leq \delta.
        \end{equation}
    \end{definition}

    Using the notion of multi-samplizer for pure states in \cref{def:samplizer-intro}, we can estimate the trace distance and square root fidelity between pure quantum states by extending the approach in \cref{fig:meta}. 
    Let $\mathsf{QPE}^{U}_{t}$ be the quantum circuit for the phase estimation of $U$ without specifying the input states (see \cref{fig:qpe}), which uses $t$ qubits to store the estimation result of the phase. 
    Then, the quantum circuit in \cref{fig:meta} (without specifying the input state $\ket{0}^{\otimes t} \ket{\varphi}$) can be described by $\mathcal{A}^{R_{\varphi}, R_{\psi}}$, which is a specialization of the quantum query algorithm $\mathcal{A}^{U_1, U_2} \coloneqq \mathsf{QPE}^{U_1 U_2}_{t}$ with $t = \Theta\rbra{\log\rbra{1/\varepsilon}}$. 
    The algorithm is informally outlined as follows (see \cref{algo:td} for the formal version).

    \vspace{5pt}
    \noindent\fbox{
    \parbox{16cm}{
    \vspace{7pt}
    \textbf{Pure-state trace distance and fidelity estimations} (informal): 

    \begin{itemize}
        \item[] \textbf{Input}: Pure quantum states $\ket{\varphi}$ and $\ket{\psi}$.
        \item[] \textbf{Output}: $\varepsilon$-estimates of $\mathrm{T}\rbra{\ket{\varphi}, \ket{\psi}}$ and $\mathrm{F}\rbra{\ket{\varphi}, \ket{\psi}}$.
    \end{itemize}

    \begin{enumerate}
        \item Let $t = \Theta\rbra{\log\rbra{1/\varepsilon}}$ and $\mathcal{A}^{U_1, U_2} \coloneqq \mathsf{QPE}^{U_1U_2}_{t}$.
        \item Prepare the quantum state $\rho \coloneqq \mathsf{Samplize}^{\mathsf{pure}}_{0.01}\ave[\big]{\mathcal{A}^{U_1, U_2}}\sbra[\big]{\ket{\varphi}, \ket{\psi}}\rbra[\big]{\ketbra{0}{0}^{\otimes t} \otimes \ketbra{\varphi}{\varphi}}$.
        \item Obtain an $\varepsilon$-estimate $\tilde \gamma$ of $\gamma$ by measuring the first $t$ qubits of $\rho$. 
        \item Return $\cos\rbra{\tilde \gamma}$ and $\sin\rbra{\tilde \gamma}$ as the estimates of $\mathrm{T}\rbra{\ket{\varphi}, \ket{\psi}}$ and $\mathrm{F}\rbra{\ket{\varphi}, \ket{\psi}}$, respectively. 
    \end{enumerate}
    }
    }
    \vspace{5pt}

    To see the correctness of the above informally stated algorithm, we note that for any $2$-samplizer $\mathsf{Samplize}^{\mathsf{pure}}_{*}\ave{*}$, we have
    \begin{equation}
        \Abs[\big]{ \mathsf{Samplize}^{\mathsf{pure}}_{0.01}\ave[\big]{\mathcal{A}^{U_1, U_2}}\sbra[\big]{\ket{\varphi}, \ket{\psi}} - \mathcal{A}^{R_{\varphi}, R_{\psi}} }_{\diamond} \leq 0.01.
    \end{equation}
    Since an $\varepsilon$-estimate of $\gamma$ can be obtained with high probability, say $\geq 0.99$, by measuring the state $\mathcal{A}^{R_{\varphi}, R_{\psi}}\ket{0}^{\otimes t} \ket{\varphi}$ (in the computational basis), one can obtain an $\varepsilon$-estimate of $\gamma$ with probability $\geq 0.95$ by measuring the following state
    \begin{equation}
        \mathsf{Samplize}^{\mathsf{pure}}_{0.01}\ave[\big]{\mathcal{A}^{U_1, U_2}}\sbra[\big]{\ket{\varphi}, \ket{\psi}}\rbra[\big]{\ketbra{0}{0}^{\otimes t} \otimes \ketbra{\varphi}{\varphi}}.
    \end{equation}
    Let $\tilde \gamma$ be the $\varepsilon$-estimate of $\gamma$ obtained from the above process. 
    Then, we have that $\cos\rbra{\tilde \gamma}$ and $\sin\rbra{\tilde \gamma}$ are $\Theta\rbra{\varepsilon}$-estimates of $\mathrm{T}\rbra{\ket{\varphi}, \ket{\psi}}$ and $\mathrm{F}\rbra{\ket{\varphi}, \ket{\psi}}$, respectively, due to the observation introduced in \cref{sec:property-reflection}. 

    To complete our approach, we provide an efficient implementation of the multi-samplizer for pure states. 

    \begin{theorem} [Implementation of multi-samplizer for pure states, \cref{thm:pure-state-samplizer} restated] \label{thm:samplizer-intro}
        There is an implementation of multi-samplizer for pure states $\mathsf{Samplize}^{\mathsf{pure}}_{*}\ave{*}$ such that if $\mathcal{A}^{U_1, U_2, \dots, U_k}$ uses $Q_j$ queries to $U_j$ for each $1 \leq j \leq k$, then  $\mathsf{Samplize}^{\mathsf{pure}}_{\delta}\ave{\mathcal{A}^{U_1, U_2, \dots, U_k}}\sbra{\ket{\psi_1}, \ket{\psi_2}, \dots, \ket{\psi_k}}$ can be implemented by using 
        \begin{equation}
            O\rbra*{\frac{Q_j}{\delta} \cdot \sum_{i=1}^k Q_i}
        \end{equation}
        samples of $\ket{\psi_j}$ for each $1 \leq j \leq k$. 
    \end{theorem}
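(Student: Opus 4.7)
The plan is to implement each query to $R_{\psi_j}$ in $\mathcal{A}^{U_1, \dots, U_k}$ by a sample-based gadget that consumes a small batch of fresh copies of $\ket{\psi_j}$ and approximately realizes the reflection; once all $\sum_i Q_i$ queries have been replaced, the total diamond-norm error of the resulting circuit is controlled by the triangle inequality. The crucial observation is that $R_{\psi_j} = e^{-i\pi \ketbra{\psi_j}{\psi_j}}$, so approximating a single reflection reduces to sample-based Hamiltonian simulation of a rank-one Hamiltonian for constant evolution time, a setting in which density-matrix exponentiation is especially efficient.

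First I would construct, for each $j$ and each integer $m \geq 1$, a quantum channel $\Lambda_{j,m}$ that acts on the query register together with $m$ fresh copies of $\ket{\psi_j}$ and satisfies
\begin{equation}
    \Abs{\Lambda_{j,m}(\sigma) - R_{\psi_j} \sigma R_{\psi_j}}_{\diamond} = O(1/m).
\end{equation}
A natural construction is a chain of $m$ partial-swap interactions of evolution time $\Delta t = \pi/m$, where each step acts as $\sigma \mapsto \tr_B\rbra{e^{-iS\Delta t}(\sigma \otimes \ketbra{\psi_j}{\psi_j})e^{iS\Delta t}}$. The per-step approximation error to the ideal channel $\sigma \mapsto e^{-i\ketbra{\psi_j}{\psi_j}\Delta t}\sigma e^{i\ketbra{\psi_j}{\psi_j}\Delta t}$ is $O(\Delta t^2)$, and composing $m$ such steps yields a total error of $O(1/m)$. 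The controlled-$R_{\psi_j}$ that appears inside $\mathsf{QPE}$ is obtained by conditioning each SWAP on the control qubit, with an identical per-step analysis.

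Next I would substitute $\Lambda_{j, m_j}$ for every call to $R_{\psi_j}$ in $\mathcal{A}$, drawing fresh samples for each substitution so that the replacements act independently. By subadditivity of the diamond norm under circuit composition, the total diamond-norm error is bounded by $\sum_{j=1}^{k} Q_j \cdot O(1/m_j)$. Choosing $m_j = c \rbra{\sum_{i=1}^{k} Q_i}/\delta$ with a sufficiently large constant $c$, uniformly over $j$, makes this sum at most $\delta$. The total number of copies of $\ket{\psi_j}$ consumed is then $Q_j \cdot m_j = O\rbra{Q_j \sum_{i=1}^{k} Q_i / \delta}$, matching the theorem statement. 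A small optimization argument (essentially Cauchy--Schwarz on the weights $m_j$) shows that this uniform choice is already sample-optimal within this family of constructions.

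The main obstacle I expect is ensuring that the per-reflection error bound holds in the diamond norm rather than only in the operator norm on the query register, since in $\mathcal{A}$ the reflection is typically applied to a register entangled with external ancillas and is invoked in controlled form inside phase estimation. To handle this I would track the partial-swap gadget through its Stinespring dilation before tracing out the sample register and invoke contractivity of the partial trace under the diamond norm, or alternatively specialize the samplizer analysis of \cite{WZ24} to the rank-one case. A secondary technical point is confirming that no logarithmic factors inflate the per-reflection cost, so that the stated $O\rbra{Q_j \sum_i Q_i / \delta}$ bound (rather than a polylog-inflated version) holds; this is possible because for rank-one Hamiltonians the partial-swap chain genuinely achieves $O(1/m)$ error rather than $\tilde O(1/m)$.
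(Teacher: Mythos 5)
Your proposal is correct and takes essentially the same route as the paper: the paper likewise replaces each of the $\sum_i Q_i$ oracle calls by a sample-based implementation of (controlled-)$R_{\psi_j} = e^{i\pi\ketbra{\psi_j}{\psi_j}}$ to per-query diamond-norm precision $\delta/\sum_i Q_i$, invoking the density-matrix-exponentiation result of \cite{KLL+17} (which is exactly your partial-swap chain, already stated in diamond norm, for the controlled version, and without logarithmic overhead), and then concludes by subadditivity of the diamond-norm error under composition, giving $Q_j \cdot O\rbra{\sum_i Q_i/\delta}$ samples of $\ket{\psi_j}$.
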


    Using the efficient implementation of multi-samplizer for pure states given in \cref{thm:samplizer-intro} (with $k \coloneqq 2$, $Q_1 = Q_2 \coloneqq O\rbra{1/\varepsilon}$ and $\delta \coloneqq 0.01$), the quantum channel $\mathsf{Samplize}^{\mathsf{pure}}_{0.01}\ave[\big]{\mathcal{A}^{U_1, U_2}}\sbra[\big]{\ket{\varphi}, \ket{\psi}}$ can be implemented by using $O\rbra{1/\varepsilon^2}$ samples of $\ket{\varphi}$ and $\ket{\psi}$.
    This yields the upper bounds stated in \cref{thm:main} (see \cref{thm:algo-analysis} for details).

    \begin{remark} \label{remark:samplizer}
        Here, we clarify the difference between the samplizer defined in \cite{WZ24} (see \cref{def:samplizer-general}) and the (multi-)samplizer for pure states defined in \cref{def:samplizer-intro}. 
        \begin{enumerate}
            \item Types of oracles: The samplizer in \cite{WZ24} requires the oracle to be a (scaled) block-encoding of the density operator of a mixed quantum state. By comparison, the samplizer for pure states in \cref{def:samplizer-intro} requires the oracle to be the reflection operator about a pure state. More discussions on the relation between the two types of samplizers are given in \cref{sec:general-samplizer}.
            \item Single/multiple oracles: The samplizer in \cite{WZ24} only allows a single oracle, while the multi-samplizer for pure states in \cref{def:samplizer-intro} allows multiple oracles. 
            \item Sample complexity: The sample complexity of the samplizer in \cite{WZ24} is $\widetilde O\rbra{Q^2/\delta}$ where $\widetilde O\rbra{\cdot}$ suppresses polylogarithmic factors in $Q$ and $1/\delta$, while the sample complexity of the $k$-samplizer for pure states given in \cref{thm:samplizer-intro} is $O\rbra{Q^2/\delta}$ (without polylogarithmic factors) for constant $k$. 
            \item Optimality: The optimality of the samplizer in \cite{WZ24} is shown via a reduction of Hamiltonian simulation. 
            However, the same reduction does not work for the multi-samplizer for pure states in \cref{def:samplizer-intro}. 
            To show the optimality of the multi-samplizer for pure states, our proof uses the algorithm for pure-state trace distance estimation given in \cref{sec:property-reflection} and is also based on the optimal sample lower bound for pure-state trace distance estimation (see \cref{sec:lb-samplizer} for details). 
        \end{enumerate}
    \end{remark}

    \subsection{Optimality} \label{sec:overview-optimal}

    Our quantum algorithm in 
    \cref{thm:main} for pure-state trace distance and square root fidelity estimations is sample-optimal due to the sample complexity lower bound $\Omega\rbra{1/\varepsilon^2}$ noted in \cite{Wan24}. 
    Based on these, we show a matching sample complexity lower bound (up to a constant factor) for the $k$-samplizer for pure states when $k$ is constant. 

    \begin{theorem} [Optimality of the samplizer for pure states, \cref{thm:lb-k-samplizer} restated] \label{thm:lb-samplizer-intro}
        For any $k$-samplizer for pure states, $\mathsf{Samplize}^{\mathsf{pure}}_{*}\ave{*}$, 
        if a quantum query algorithm $\mathcal{A}^{U_1, U_2, \dots, U_k}$ uses $Q_j$ queries to $U_j$ for each $j$, 
        then the implementation of $\mathsf{Samplize}^{\mathsf{pure}}_{\delta}\ave{\mathcal{A}^{U_1, U_2, \dots, U_k}}\sbra{\ket{\psi_1}, \ket{\psi_2}, \dots, \ket{\psi_k}}$ requires $\Omega\rbra{Q_j^2/\delta}$ samples of $\ket{\psi_j}$ for each $j$.
    \end{theorem}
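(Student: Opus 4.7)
The plan is to derive the sample lower bound via a reduction from the $\Omega\rbra{1/\varepsilon^2}$ sample complexity of pure-state trace distance estimation (cf.~\cite{ALL22,Wan24}), using the phase-estimation meta-algorithm from \cref{sec:property-reflection} as the canonical hard instance. The intuition is that any samplizer beating $\Omega\rbra{Q^2/\delta}$ samples would, when composed with the phase-estimation circuit of $R_\varphi R_\psi$, yield a trace distance estimator that violates the established lower bound.

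Fix an index $j$ and, by symmetry, treat $j = 1$. For $k > 2$, set the other $k-2$ oracles to identity (needing zero queries and hence irrelevant to the $j$-slot sample count), reducing to the two-oracle case. For precision $\varepsilon$, instantiate $\mathcal{A}^{U_1, U_2} \coloneqq \mathsf{QPE}^{U_1 U_2}_t$ with $t = \Theta\rbra{\log\rbra{1/\varepsilon}}$, so that on reflection oracles the circuit $\mathcal{A}^{R_\varphi, R_\psi}$ returns an $\varepsilon$-estimate of $\gamma$ with probability at least $0.99$, using $Q = Q_1 = Q_2 = \Theta\rbra{1/\varepsilon}$ queries to each oracle. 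Applying any hypothetical $k$-samplizer with error $\delta$ yields a channel $\delta$-close to $\mathcal{A}^{R_\varphi, R_\psi}$ in diamond norm that consumes $S_1\rbra{Q, \delta}$ samples of $\ket{\varphi}$. Diamond-norm closeness propagates to output statistics within $\delta$ in total variation, so for any sufficiently small constant $\delta$ the samplized algorithm is still a valid pure-state trace distance estimator with success probability $\geq 2/3$. The per-state $\Omega\rbra{1/\varepsilon^2}$ sample lower bound of~\cite{ALL22,Wan24} then forces $S_1\rbra{Q, \delta} = \Omega\rbra{Q^2}$ whenever $\delta$ is a small constant, and symmetrizing the roles of the oracles (and of the chosen $j$-slot when $k>2$) gives the analogous bound for every~$j$.

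The main obstacle, and the delicate step, is extracting the extra $1/\delta$ factor needed to match the full $\Omega\rbra{Q^2/\delta}$ bound. I plan to handle this by coupling $\varepsilon$ and $\delta$ so that the samplizer's diamond-norm budget becomes the binding constraint on the estimator's accuracy: running the phase-estimation meta-algorithm at a precision matched to $\delta$, and applying the trace distance lower bound at that scale, should yield the desired $1/\delta$ dependence. The key technical issue is to rule out that a samplizer might evade this cost by exploiting structure specific to the phase-estimation circuit. However, since the samplizer must work for \emph{every} query algorithm with the prescribed query profile, the phase-estimation circuit serves as a valid adversarial witness, which I expect to make the argument go through.
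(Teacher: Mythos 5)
Your reduction for the constant-$\delta$ regime matches the paper's: both use the samplized phase estimation of $R_\varphi R_\psi$ on the hard pair $\ket{\psi_0},\ket{\psi_\varepsilon}$ from the trace-distance lower bound, propagate the diamond-norm error to the output statistics, and invoke the $\Omega\rbra{1/\varepsilon^2}$ distinguishing bound to get $S\rbra{Q,\delta_0}=\Omega\rbra{Q^2}$ for a small constant $\delta_0$. Up to that point the argument is sound (modulo the minor point that for $k>2$ you should say why setting the unused oracles' query counts to zero still bounds $S_j$ at general arguments, i.e.\ some monotonicity in the other $Q_i$'s; the paper sidesteps this by reducing everything to the $1$-samplizer).

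The genuine gap is the $1/\delta$ factor, and your proposed fix does not close it. Coupling $\varepsilon$ and $\delta$ cannot work, because the hardness of the distinguishing instance is governed entirely by $\varepsilon$ (equivalently by $Q=\Theta\rbra{1/\varepsilon}$): the Holevo--Helstrom bound gives $\Omega\rbra{1/\varepsilon^2}=\Omega\rbra{Q^2}$ samples regardless of how accurately the samplizer simulates the query circuit, and the only role $\delta$ plays in the reduction is that it must be below the constant acceptance gap for the estimator to remain correct. Making $\delta$ smaller does not make the task harder, so no single-instance reduction of this form can produce a $1/\delta$. The missing ingredient is the paper's subadditivity property of samplizer cost under composition,
\begin{equation*}
  S\rbra{Q_1+Q_2,\,\delta_1+\delta_2}\;\leq\; S\rbra{Q_1,\delta_1}+S\rbra{Q_2,\delta_2},
\end{equation*}
which holds because one can samplize a $(Q_1+Q_2)$-query circuit by samplizing its two halves separately and adding the diamond-norm errors. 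This yields $S\rbra{mQ,m\delta}\leq m\,S\rbra{Q,\delta}$, and taking $m=\floor{1/(9\delta)}$ together with monotonicity in $\delta$ and the constant-$\delta$ bound $S\rbra{mQ,1/9}\geq c\rbra{mQ}^2$ gives $S\rbra{Q,\delta}\geq \frac{1}{m}\,c\rbra{mQ}^2 = c\,m\,Q^2=\Omega\rbra{Q^2/\delta}$. In other words, the $1/\delta$ comes from an amplification-by-composition argument applied to the samplizer itself, not from tuning the hard instance; you would need to add this step to complete the proof.
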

    \begin{proof}[Proof sketch]
    To derive a sample lower bound for the samplizer, we need the meta-algorithm $\mathcal{A}^{U_1, U_2}$ for estimating $\mathrm{T}\rbra{\ket{\varphi}, \ket{\psi}}$, which is based on the observation in \cref{sec:property-reflection} and uses the phase estimation. As will be shown in \cref{corollary:td-fi-est},
    this meta-algorithm $\mathcal{A}^{U_1, U_2}$ makes $O\rbra{1/\varepsilon}$ queries to the reflection operators $U_1 \coloneqq R_{\varphi}$ and $U_2 \coloneqq R_{\psi}$. 
    Suppose that there is an implementation of the \mbox{(multi-)samplizer} for pure states $\mathsf{Samplize}^{\mathsf{pure}}_{*}\ave{*}$ with sample complexity $\mathsf{S}\rbra{Q, \delta}$, meaning that we can implement $\mathsf{Samplize}^{\mathsf{pure}}_{\delta}\ave{\mathcal{B}^{V_1, V_2}}\sbra{\ket{\psi_1}, \ket{\psi_2}}$ by using $\mathsf{S}\rbra{Q, \delta}$ samples of $\ket{\psi_1}$ and $\ket{\psi_2}$ for every quantum query algorithm $\mathcal{B}^{V_1, V_2}$ that makes $Q$ queries to $V_1$ and $V_2$. 
    On the other hand, $\mathsf{Samplize}^{\mathsf{pure}}_{0.01}\ave{\mathcal{A}^{U_1, U_2}}\sbra{\ket{\varphi},\ket{\psi}}$ is an $\varepsilon$-estimator for $\mathrm{T}\rbra{\ket{\varphi}, \ket{\psi}}$ with sample complexity $\mathsf{S}\rbra{\Theta\rbra{1/\varepsilon}, 0.01}$, while any $\varepsilon$-estimator for $\mathrm{T}\rbra{\ket{\varphi}, \ket{\psi}}$ requires sample complexity $\Omega\rbra{1/\varepsilon^2}$, which gives $\mathsf{S}\rbra{\Theta\rbra{1/\varepsilon}, 0.01} \geq \Omega\rbra{1/\varepsilon^2}$. 
    Hence, we can further obtain $\mathsf{S}\rbra{Q, \delta} \geq \Omega\rbra{Q^2/\delta}$ using the properties of $\mathsf{S}\rbra{Q, \delta}$.
    More details will be presented in \cref{thm:lower-bound-samplizer}. 
    \end{proof}

    It is worth mentioning that the sample complexity lower bound for the samplizer for pure states is based on the matching lower bound for pure-state trace distance and square root fidelity estimations, in sharp contrast to the proof for the optimality of the general samplizer in \cite{WZ24} (as discussed in \cref{remark:samplizer}). 
    
    \section{Related Work} \label{sec:relatedwork}

    There are also a few other approaches to estimating trace distance and fidelity in the literature. 
    Except for those based on quantum state tomography \cite{GLF+10,HHJ+17,OW16}, entanglement witnesses \cite{TYKI06,GLGP07,GT09} are a practical technique for estimating the fidelity of certain pure states using few measurements; 
    and direct fidelity estimation \cite{FL11} is designed for pure-state fidelity estimation using only Pauli measurements. 
    In \cite{HKP20}, classical shadows were used to estimate the fidelity between a pure state and a mixed state.
    A distributed quantum algorithm for pure-state squared fidelity estimation was proposed in \cite{ALL22}, which was recently extended to the case with limited quantum computation in \cite{AS24,GHYZ24}. 
    An optimal quantum query algorithm for pure-state trace distance and square root fidelity estimations was proposed in \cite{Wan24}. 
    
    For the case of mixed quantum states, estimating their fidelity are known to have efficient quantum algorithms \cite{WZC+23,WGL+24,GP22,LWWZ24}; especially, in \cite{GP22}, they proposed a sample-efficient quantum algorithm for fidelity estimation with sample complexity $\widetilde O\rbra{r^{5.5}/\varepsilon^{12}}$, where $r$ is the rank of mixed quantum states and $\varepsilon$ is the desired additive error.
    Also, estimating the trace distance are known to have efficient quantum algorithms \cite{WGL+24,WZ24b,LGLW23}; especially, in \cite{WZ24b}, they proposed a sample-efficient quantum algorithm for trace distance estimation with sample complexity $\widetilde O\rbra{r^2/\varepsilon^5}$. 
    Apart from estimation, the certification of $d$-dimensional mixed quantum states is highly related and was studied in \cite{BOW19}, where they provided quantum algorithms that determine whether two mixed quantum states are identical or $\varepsilon$-far with sample complexity $\Theta\rbra{d/\varepsilon^2}$ (w.r.t.\ trace distance) and $\Theta\rbra{d/\varepsilon}$ (w.r.t.\ infidelity), which are optimal due to the lower bounds in \cite{OW21}.

    \section{Preliminaries} \label{sec:preliminary}

    In this section, we introduce different norms used in this paper and list the quantum algorithmic tools essential for designing our quantum algorithms. 

    \subsection{Norms}

    \paragraph{Vector norms.} 
    The $1$-norm (a.k.a.\ Taxicab norm or Manhattan norm) of a $d$-dimensional real-valued vector $v = \rbra{v_0, v_1, \dots, v_{d-1}}^{\mathrm{T}} \in \mathbb{R}^d$ is denoted by
    \begin{equation}
        \Abs{v}_1 = \sum_{j=0}^{d-1} \abs{v_j}.
    \end{equation}
    The total variation distance between two $d$-dimensional probability distributions $p, q \in \mathbb{R}^{d}$ (treated as real-valued vectors) is denoted by 
    \begin{equation}
        d_{\textup{TV}}\rbra{p, q} = \frac{1}{2}\Abs{p - q}_1.
    \end{equation}
    
    A $d$-dimensional (pure) quantum state is described by a complex-valued vector 
    \begin{equation}
        \ket{\psi} = \sum_{j=0}^{d-1} \alpha_j \ket{j} \in \mathbb{C}^d,
    \end{equation}
    where $\alpha_j \in \mathbb{C}$ for each $0 \leq j < d$ is a complex number and $\cbra{\ket{j}}$ is the computational basis. 
    The inner product of two vectors $\ket{\varphi} = \sum_{j=0}^{d-1} \beta_j \ket{j}$ and $\ket{\psi} = \sum_{j=0}^{d-1} \alpha_j \ket{j}$ is denoted by
    \begin{equation}
        \braket{\varphi}{\psi} = \sum_{j=0}^{d-1} \beta_j^* \alpha_j,
    \end{equation}
    where $z^*$ is the complex conjugate of the complex number $z$. 
    The $2$-norm (a.k.a.\ Euclidean norm) of $\ket{\psi}$ is denoted by 
    \begin{equation}
        \Abs{\ket{\psi}} = \sqrt{\braket{\psi}{\psi}}.
    \end{equation}
    
    \paragraph{Matrix norms.} A $d$-dimensional linear operator is described by a $d$-dimensional matrix $A \in \mathbb{C}^{d \times d}$. 
    The matrix norm induced by vector $2$-norm (a.k.a.\ operator norm and spectral norm) is denoted by
    \begin{equation}
        \Abs{A} = \sup_{\ket{\psi} \in \mathbb{C}^d \colon \Abs{\ket{\psi}} \neq 0} \frac{\Abs{A\ket{\psi}}}{\Abs{\ket{\psi}}}.
    \end{equation}
    The Schatten $1$-norm (a.k.a.\ trace norm and nuclear norm) of $A$ is denoted by 
    \begin{equation}
        \Abs{A}_{\tr} = \tr\rbra*{\abs{A}} = \tr\rbra*{\sqrt{A^\dag A}},
    \end{equation}
    where 
    \begin{equation}
        \tr\rbra{A} = \sum_{j=0}^{d-1} \bra{j} A \ket{j}. 
    \end{equation}

    \paragraph{Diamond norm.} A $d$-dimensional superoperator is described by a linear map on $d$-dimensional linear operators, $\Phi \colon \mathbb{C}^{d \times d} \to \mathbb{C}^{d \times d}$. 
    The diamond norm (a.k.a\ completely bounded trace norm) of $\Phi$ is denoted by 
    \begin{equation}
        \Abs{\Phi}_{\diamond} = \sup_{X \in \mathbb{C}^{d^2 \times d^2} \colon \Abs{X}_{\tr} \leq 1} \Abs{\rbra*{\Phi \otimes \mathcal{I}_d} X}_{\tr},
    \end{equation}
    where the superoperator $\mathcal{I}_d$ is the identity operator. 
    It is noteworthy that the diamond distance between two completely positive and trace-preserving maps (a.k.a.\ quantum channels) $\mathcal{E}$ and $\mathcal{F}$ is denoted by $\Abs{\mathcal{E} - \mathcal{F}}_{\diamond}$. 

    \subsection{Quantum phase estimation} \label{sec:QFT}

    Quantum phase estimation \cite{Kit95} is a basic quantum subroutine, which was used in Shor's quantum algorithm for factorization \cite{Sho97}.
    Here, we use the version in \cite{NC10}.

    \begin{theorem} [Quantum phase estimation, {\cite[Section 5.2]{NC10}}] \label{thm:qpe}
        Suppose that $U$ is a unitary operator. 
        There is a quantum circuit $\mathsf{QPE}^{U}_{\varepsilon, \delta}$ using $O\rbra{1/\varepsilon\delta}$ queries to controlled-$U$ that performs the transform
        \begin{equation}
            \mathsf{QPE}^{U}_{\varepsilon, \delta} \colon \ket{0}\ket{\psi} \mapsto \ket{\tilde \lambda}\ket{\psi}
        \end{equation}
        for any eigenvector $\ket{\psi}$ of $U$ with $U\ket{\psi} = e^{i2\pi\lambda}$ and $\lambda \in [0, 1)$,
        where if we measure $\ket{\tilde \lambda}$ in the computational basis (with some classical postprocessing), then with probability at least $1-\varepsilon$ we will obtain a real number $\tilde \lambda \in [0, 1)$ satisfying
        \begin{equation}
            \min\cbra*{\abs{\tilde \lambda - \lambda}, 1 - \abs{\tilde \lambda - \lambda}} < \delta.
        \end{equation}
    \end{theorem}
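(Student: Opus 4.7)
The plan is to implement the standard phase estimation circuit of Kitaev as described in Nielsen and Chuang. Set $t = \lceil \log_2(1/\delta) \rceil + \lceil \log_2(1 + 1/(2\varepsilon)) \rceil$ so that $2^t = O(1/(\varepsilon\delta))$. First I would prepare $t$ ancilla qubits in uniform superposition by applying Hadamard gates, then apply controlled-$U^{2^j}$ gates conditioned on the $j$-th ancilla qubit for $j = 0, 1, \dots, t-1$, and finally apply the inverse quantum Fourier transform on the ancilla register to produce the state $\ket{\tilde\lambda}\ket{\psi}$. Since each controlled-$U^{2^j}$ can be implemented using $2^j$ calls to controlled-$U$, the total query count is $\sum_{j=0}^{t-1} 2^j = 2^t - 1 = O(1/(\varepsilon\delta))$, matching the claimed bound.

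The correctness analysis follows the standard route. On the eigenvector $\ket{\psi}$ with $U\ket{\psi} = e^{i2\pi\lambda}\ket{\psi}$, after the Hadamards and controlled powers the ancilla register is in the state
\begin{equation}
    \frac{1}{\sqrt{2^t}}\sum_{k=0}^{2^t - 1} e^{i2\pi\lambda k} \ket{k},
\end{equation}
while $\ket{\psi}$ is preserved by eigenvalue extraction. Applying the inverse QFT and measuring yields outcome $m \in \{0, 1, \dots, 2^t - 1\}$ with probability
\begin{equation}
    p(m) = \frac{1}{2^{2t}} \left\lvert \frac{1 - e^{i 2\pi (2^t \lambda - m)}}{1 - e^{i 2\pi (\lambda - m/2^t)}} \right\rvert^2,
\end{equation}
which is a Fejér-kernel type distribution concentrated around $m/2^t \approx \lambda$ in the circular sense.

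To convert this into the stated precision guarantee, I would set $\tilde\lambda = m/2^t$ (the classical postprocessing is simply dividing the measurement outcome by $2^t$ and interpreting the circular distance modulo $1$) and apply the standard inequality $|\sin(\pi x)| \geq 2|x|$ for $|x| \leq 1/2$. This gives a uniform lower bound on the denominator $|1 - e^{i 2\pi(\lambda - m/2^t)}|$ whenever $m/2^t$ is far from $\lambda$ in the circular metric, so the total probability of the bad event $\min\{|\tilde\lambda - \lambda|,\, 1 - |\tilde\lambda - \lambda|\} \geq \delta$ can be bounded by summing a geometric-type series and shown to be at most $\varepsilon$ by our choice of the buffer $\lceil \log_2(1 + 1/(2\varepsilon)) \rceil$.

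I do not expect a substantive obstacle, as the construction is textbook. The only mildly delicate point is bookkeeping the circular metric: because $\lambda \in [0, 1)$ can lie near the boundary, the natural distance is the circular one $\min\{|\tilde\lambda - \lambda|,\, 1 - |\tilde\lambda - \lambda|\}$ rather than the ordinary one, and this is precisely the form asserted in the theorem. Everything else reduces to standard Fourier estimates on $\mathbb{Z}_{2^t}$.
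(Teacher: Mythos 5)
Your proposal is correct and follows exactly the construction the paper relies on: the paper does not prove this theorem itself but cites it to Nielsen and Chuang, pointing to the very same circuit (Hadamards, controlled powers $U^{2^j}$, inverse QFT on $t = \Theta(\log(1/\varepsilon\delta))$ ancillas) and the same Fej\'er-kernel tail analysis that you sketch. The only nit is that the standard bookkeeping needs a buffer of $\lceil \log_2(2 + 1/(2\varepsilon)) \rceil$ rather than $\lceil \log_2(1 + 1/(2\varepsilon)) \rceil$ to push the failure probability all the way below $\varepsilon$, but this constant is absorbed by the claimed $O(1/(\varepsilon\delta))$ query count.
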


    An implementation of the quantum phase estimation in \cref{thm:qpe} is also given in {\cite[Section 5.2]{NC10}} (see \cref{fig:qpe}), where the parameter $t$ is chosen to be $t = \Theta\rbra{\log\rbra{1/\varepsilon\delta}}$ and $\mathrm{QFT}_t$ denotes the unitary operator for quantum Fourier transform
    \begin{equation}
        \mathrm{QFT}_t \colon \ket{j} \mapsto \frac{1}{\sqrt{2^t}} \sum_{k=0}^{2^t-1} e^{i2\pi jk/2^t} \ket{k}.
    \end{equation}
    In addition to the queries to the controlled-$U^{2^j}$ gates, the implementation in \cref{fig:qpe} uses $t$ Hadamard gates and an inverse $t$-qubit QFT.
    In \cite{HH00}, it was shown that an (inverse) $t$-qubit QFT can be implemented by using $O\rbra{t \log t}$ two-qubit quantum gates. 

\begin{figure} [!htp]
\centering
\begin{quantikz}
    \lstick{$\ket{0}$} & \gate{H} & & & & \dots \dots & \ctrl{5} & \gate[5]{\mathrm{QFT}_t^\dag} & \meter{} \\
    \lstick{$\vdots$} \\
    \lstick{$\ket{0}$} & \gate{H} & & & \ctrl{3} & \dots \dots & & & \meter{} \\
    \lstick{$\ket{0}$} & \gate{H} & & \ctrl{2} & & \dots \dots & & & \meter{} \\
    \lstick{$\ket{0}$} & \gate{H} & \ctrl{1} & & & \dots \dots & & & \meter{} \\
    \lstick{$\ket{\psi}$} & & \gate{{U}^{2^0}} & \gate{{U}^{2^1}} & \gate{{U}^{2^2}} & \dots \dots & \gate{{U}^{2^{t-1}}} &
\end{quantikz}
\caption{Phase estimation of $U$ on $\ket{\psi}$.}
\label{fig:qpe}
\end{figure}

    \subsection{Density matrix exponentiation}

    Density matrix exponentiation, also called sample-based Hamiltonian simulation, was proposed in \cite{LMR14,KLL+17,GKP+24}, which allows us to approximately implement the unitary operator $e^{-i\rho t}$ by using identical copies of mixed quantum states $\rho$. 
    
    \begin{theorem} [{\cite[Theorem 2]{KLL+17}}] \label{thm:density-matrix-exp}
        Given sample access to a mixed quantum state $\rho$, for every $0 < \delta \leq 1/6$ and $t \geq 6 \pi \delta$, it is necessary and sufficient to use $\Theta\rbra{t^2/\delta}$ samples of $\rho$ to implement a quantum channel that is $\delta$-close to (controlled-)$e^{-i\rho t}$ in diamond norm.
        Moreover, the implementation uses additional $O\rbra{nt^2/\delta}$ one- and two-qubit quantum gates, if $\rho$ is an $n$-qubit quantum state. 
    \end{theorem}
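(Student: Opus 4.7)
The plan is to prove the sufficient and necessary directions separately. For the upper bound, I would follow the Lloyd--Mohseni--Rebentrost sample-based simulation and promote its trace-distance analysis to a diamond-norm statement; for the lower bound, I would reduce a known sample-complexity lower bound (phase estimation / overlap estimation) to simulating $e^{-i\rho t}$.

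For the upper bound, the central identity is
\begin{equation*}
\tr_1\!\bigl[ e^{-iS\tau}(\rho \otimes \sigma) e^{iS\tau} \bigr] = \sigma - i\tau[\rho,\sigma] + O(\tau^2) = e^{-i\rho\tau}\,\sigma\, e^{i\rho\tau} + O(\tau^2),
\end{equation*}
where $S$ is the SWAP on two $n$-qubit registers. This follows from expanding $e^{\pm iS\tau}$ to second order and using $\tr_1[S(\rho \otimes \sigma)] = \sigma\rho$ together with $S(A\otimes B)S = B \otimes A$. Setting $N = \lceil c\, t^2/\delta \rceil$ and $\tau = t/N$ for a suitable absolute constant $c$, the protocol tensors a fresh sample of $\rho$ onto the working register, applies $e^{-iS\tau}$, and traces out the sample; iterating $N$ times yields a channel whose diamond-norm distance from $e^{-i\rho t}(\cdot)e^{i\rho t}$ is $O(N\tau^2) = O(\delta)$, using contractivity of the partial trace and sub-additivity of errors under composition in the diamond norm (which holds uniformly over reference systems, hence the passage from trace-distance-on-inputs to diamond norm). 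The conditions $\delta \leq 1/6$ and $t \geq 6\pi\delta$ are only used to absorb additive constants into the leading term. For the controlled version, replace $S$ by the controlled-SWAP $\Lambda S$ conditioned on an ancilla: on the $\ket{0}$-branch, $\Lambda S$ acts as the identity, so the same first-order expansion yields controlled-$e^{-i\rho t}$ up to the same diamond error. The gate count $O(nt^2/\delta)$ follows by decomposing each $e^{-iS\tau}$ into $O(n)$ two-qubit rotations, using $S = \bigotimes_{j=1}^{n} S_j$ and the commutativity of the $S_j$'s.

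The main obstacle is the matching lower bound $\Omega(t^2/\delta)$. My plan is a two-step reduction. In the easy regime, take $\rho = \ket{\psi}\!\bra{\psi}$, so $e^{-i\rho t} = I + (e^{-it}-1)\ket{\psi}\!\bra{\psi}$; a $\delta$-accurate simulator of $e^{-i\rho t}$, applied inside phase estimation on a known reference state $\ket{\phi}$, produces an estimator of the overlap $\abs{\braket{\psi}{\phi}}^2$ whose error is controlled by $t$ and $\delta$. Choosing $t = \Theta(1)$ and $\delta = \Theta(\varepsilon^2)$ turns any sub-$\Omega(1/\varepsilon^2)$-sample simulator into a violation of the SWAP-test lower bound of \cite{ALL22}, giving $\Omega(1/\delta)$. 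To lift this to the full $\Omega(t^2/\delta)$ scaling, I would invoke an adversary-method lower bound over a Haar-random pure-state ensemble (as in \cite{KLL+17}): distinguishing two values of $\braket{\psi}{\phi}$ via the simulated $e^{-i\rho t}$ with accuracy $\delta$ requires resolving phase shifts of order $t\,\Delta$ in $\Delta = \abs{\braket{\psi}{\phi}}^2$, and the quantum adversary bound against random unit vectors forces a sample count scaling as $t^2/\delta$.

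The delicate step in the reduction is extracting the $t$ and $\delta$ dependencies simultaneously: a naive reduction folds one into the constant of the other. I would circumvent this by choosing the hard instance so that the gap in the discrimination problem scales as $\sqrt{\delta}/t$, matching the resolution at which phase estimation of $e^{-i\rho t}$ becomes sensitive to the perturbation; this makes the two parameters separable and yields the tight $\Omega(t^2/\delta)$ lower bound.
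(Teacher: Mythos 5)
The paper does not prove this statement; it is imported verbatim from \cite[Theorem 2]{KLL+17} and used as a black box, so your proposal can only be judged against the known proof there. Your upper-bound sketch follows the correct Lloyd--Mohseni--Rebentrost route and the first-order identity is right, but two points deserve care. First, the genuinely nontrivial content of the upper bound in \cite{KLL+17} is precisely the step you assert in passing: that the per-step error is $O(\tau^2)$ \emph{in diamond norm}, i.e., uniformly over entangled inputs with a reference system; the original LMR analysis only controls trace distance on product inputs, and promoting it requires an explicit bound on the Choi representation of the one-step channel, not just ``contractivity of the partial trace.'' Second, your gate count rests on $e^{-iS\tau}$ factoring over the qubit-wise swaps $S_j$; but $S=\bigotimes_j S_j$ is a \emph{product}, not a sum, of the $S_j$, so $e^{-iS\tau}\neq\prod_j e^{-iS_j\tau}$. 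The $O(n)$ count is still achievable (e.g., via $e^{-iS\tau}=\cos\tau\, I - i\sin\tau\, S$ and computing the parity of singlet occupations into an ancilla), but not by the argument you give.

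The serious gap is the lower bound. You propose to extract the full $\Omega(t^2/\delta)$ from pure-state instances $\rho=\ketbra{\psi}{\psi}$, resolving ``phase shifts of order $t\Delta$'' in the overlap. This cannot work: $e^{-i\ketbra{\psi}{\psi}t}$ is $2\pi$-periodic in $t$, so increasing $t$ does not make the target unitary harder to simulate for pure states, and indeed \cite[Theorem 4]{KLL+17} shows the pure-state sample complexity is only $\Theta(1/\delta)$ with no $t^2$ factor. (The present paper makes exactly this point in \cref{sec:lb-samplizer} when explaining why a pure-state reduction cannot give the $Q^2$ dependence for the samplizer.) Perturbing $\ket{\psi}$ moves the \emph{eigenvectors} of $e^{-i\rho t}$, not its eigenphases, so the sensitivity does not scale with $t$ beyond $t=O(1)$. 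The actual hard instance in \cite{KLL+17} is a mixed qubit state $\rho(x)=x\ketbra{0}{0}+(1-x)\ketbra{1}{1}$ with $x=\tfrac12$ versus $x=\tfrac12+\Theta(1/t)$: there the perturbation moves the \emph{eigenvalues}, the eigenphases $e^{-ixt}$ separate by $\Theta(1)$ after time $t$, yet the two states have trace distance $\Theta(1/t)$ and so require $\Omega(t^2)$ samples to distinguish; combining this with the precision parameter yields $\Omega(t^2/\delta)$. Your plan would need to be rebuilt around such a mixed-state eigenvalue-perturbation instance to recover the $t^2$ dependence.
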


    \section{Estimation with Reflections} \label{sec:est-refl}

    In this section, we study how to estimate the trace distance and square root fidelity between pure quantum states given their Householder reflection operators. 
    We first introduce an important property of the product of two Householder reflections in \cref{sec:prod-refl}.
    Then, using this property, in \cref{sec:meta-algo} we provide a meta-algorithm based on quantum phase estimation. 

    \subsection{Product of Householder reflections} \label{sec:prod-refl}

    For a unit vector $\ket{\psi}$, we use $R_{\psi} = I - 2\ketbra{\psi}{\psi}$ to denote the Householder reflection about $\ket{\psi}$. 
    The following is the key property of the product of Householder reflections that enables our quantum algorithm for pure-state trace distance and fidelity estimations.
    We use $\arg\rbra{a+bi}=\arctan\rbra{b/a}$ to denote the argument of a complex number $a+bi$.

    \begin{lemma} \label{lemma:eigenvector-reflection}
        Suppose that $\ket{\varphi}$ and $\ket{\psi}$ are two unit vectors such that $\abs[\big]{\braket{\varphi}{\psi}} \neq 1$.
        Let
        \begin{equation}
            \ket{\varphi^{\perp}} = \frac{\ket{\psi} - \braket{\varphi}{\psi} \ket{\varphi}}{\Abs*{\ket{\psi} - \braket{\varphi}{\psi} \ket{\varphi}}}.
        \end{equation}
        Let $\theta = \arg\rbra{\braket{\varphi}{\psi}}$ and $\theta^{\perp} = \arg\rbra{\braket{\varphi^{\perp}}{\psi}}$. 
        Then,
        \begin{equation} \label{eq:def-Phi-pm}
            \ket{\Phi_{\pm}} = \frac{1}{\sqrt{2}} \rbra*{ \ket{\varphi} \pm e^{i\rbra*{\theta^\perp - \theta + \frac{\pi}{2}}} \ket{\varphi^{\perp}} }. 
        \end{equation}
        are unit eigenvectors of $R_{\varphi} R_{\psi}$ with eigenvalues $e^{i\rbra*{\pi \mp 2\gamma}}$, where $\gamma = \arcsin\rbra{\abs{\braket{\varphi}{\psi}}}$. 
    \end{lemma}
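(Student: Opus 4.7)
The plan is to reduce the problem to the two-dimensional subspace $V = \spanspace\cbra{\ket{\varphi}, \ket{\varphi^\perp}}$ and verify the eigenvalue equation by an explicit $2 \times 2$ computation. By construction, $\ket{\varphi^\perp}$ is the normalized Gram--Schmidt residue of $\ket{\psi}$ against $\ket{\varphi}$, so $\cbra{\ket{\varphi}, \ket{\varphi^\perp}}$ is orthonormal and $\ket{\psi}$ itself lies in $V$. Consequently both $R_\varphi$ and $R_\psi$ preserve $V$ (acting as the identity on its orthogonal complement), and the normalization $\Abs{\ket{\Phi_\pm}}=1$ is immediate from the orthonormality of the basis.

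A preliminary step is to pin down the phase $\theta^\perp$. A direct calculation from the definition of $\ket{\varphi^\perp}$ gives $\braket{\varphi^\perp}{\psi} = \Abs{\ket{\psi} - \braket{\varphi}{\psi}\ket{\varphi}} = \sqrt{1 - \abs{\braket{\varphi}{\psi}}^2} = \cos\gamma$, which is real and nonnegative, so $\theta^\perp = 0$ and the stated phase factor simplifies to $e^{i\rbra{\theta^\perp - \theta + \pi/2}} = ie^{-i\theta}$. In the basis $\cbra{\ket{\varphi}, \ket{\varphi^\perp}}$ one then has $\ket{\psi} = \sin\gamma\, e^{i\theta}\ket{\varphi} + \cos\gamma\,\ket{\varphi^\perp}$.

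Next I would assemble the matrix of $R_\varphi R_\psi|_V$ in this basis. Since $R_\varphi|_V = \diag\rbra{-1, 1}$, and expanding $2\ketbra{\psi}{\psi}$ together with the identities $1 - 2\sin^2\gamma = \cos\rbra{2\gamma}$ and $2\sin\gamma\cos\gamma = \sin\rbra{2\gamma}$ yields
\begin{equation*}
    R_\varphi R_\psi|_V = \begin{pmatrix} -\cos\rbra{2\gamma} & \sin\rbra{2\gamma}e^{i\theta} \\ -\sin\rbra{2\gamma}e^{-i\theta} & -\cos\rbra{2\gamma} \end{pmatrix},
\end{equation*}
which has trace $-2\cos\rbra{2\gamma}$ and determinant $1$. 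Hence its eigenvalues are $-\cos\rbra{2\gamma} \pm i\sin\rbra{2\gamma} = e^{i\rbra{\pi \mp 2\gamma}}$ (using $\gamma \in [0,\pi/2)$, so that $\sin\rbra{2\gamma} \geq 0$ and the sign of the imaginary part is unambiguous). Solving the eigenvector equation for the eigenvalue $-e^{-i 2\gamma}$ gives components satisfying $b = ie^{-i\theta}a$, and normalization then recovers $\ket{\Phi_+}$ exactly in the form stated in \cref{eq:def-Phi-pm}; the $\ket{\Phi_-}$ case is entirely symmetric.

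The lemma is essentially a $2 \times 2$ bookkeeping exercise, so no step should be deep. The one place requiring care is phase accounting: one must verify that the paper's specific definition of $\ket{\varphi^\perp}$ forces $\theta^\perp = 0$, because otherwise the relative phase between the $\ket{\varphi}$ and $\ket{\varphi^\perp}$ components of $\ket{\Phi_\pm}$ would shift and the eigenvalue equation would fail.
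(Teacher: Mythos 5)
Your proof is correct and follows essentially the same route as the paper: restrict to the two-dimensional subspace $\spanspace\cbra{\ket{\varphi},\ket{\varphi^\perp}}$ and carry out an explicit computation in that orthonormal basis; the paper verifies the eigenvalue equation for $\ket{\Phi_\pm}$ directly by computing $R_\varphi R_\psi\ket{\varphi}$ and $R_\varphi R_\psi\ket{\varphi^\perp}$ and substituting, whereas you derive the eigenvalues from the trace and determinant of the $2\times 2$ matrix and then solve for the eigenvectors --- the same computation organized differently. Your observation that $\braket{\varphi^\perp}{\psi} = \sqrt{1-\abs{\braket{\varphi}{\psi}}^2} \geq 0$, hence $\theta^\perp = 0$, is correct and is a genuine simplification the paper does not exploit (it carries $\theta^\perp$ symbolically throughout); the only pedantic caveat is that when $\braket{\varphi}{\psi}=0$ your division by $\sin\rbra{2\gamma}$ in the eigenvector step degenerates, but there $R_\varphi R_\psi$ restricts to $-I$ on the subspace, so every unit vector is an eigenvector with eigenvalue $e^{i\pi}$ and the claim holds trivially.
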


    \begin{proof}
    Since $\ket{\psi} \in \mathcal{H}_{\varphi} =\spanspace\cbra{\ket{\varphi}, \ket{\varphi^\perp}}$, we can represent $\ket{\psi}$ in terms of $\ket{\varphi}$ and $\ket{\varphi^{\perp}}$ as follows:
    \begin{align}
        \ket{\psi} 
        & = \braket{\varphi}{\psi} \ket{\varphi} + \braket{\varphi^\perp}{\psi} \ket{\varphi^{\perp}} \\
        & = e^{i\theta} \cdot \abs{\braket{\varphi}{\psi}} \cdot \ket{\varphi} + e^{i\theta^\perp} \cdot \abs{\braket{\varphi^\perp}{\psi}} \cdot \ket{\varphi^\perp} \\
        & = e^{i\theta} \sin\rbra{\gamma} \ket{\varphi} + e^{i\theta^\perp} \cos\rbra{\gamma} \ket{\varphi^\perp}. \label{eq:alter-psi}
    \end{align}
    Then, we write out an orthonormal basis of $R_{\psi} \mathcal{H}_{\varphi}$, namely the vectors obtained by applying the reflection $R_{\psi}$ on $\ket{\varphi}$ and $\ket{\varphi^\perp}$:
    \begin{align}
        R_{\psi}\ket{\varphi} & = \ket{\varphi} - 2 \braket{\psi}{\varphi} \ket{\psi}, \label{eq:Rpsi-varphi} \\
        R_{\psi}\ket{\varphi^\perp} & = \ket{\varphi^\perp} - 2 \braket{\psi}{\varphi^\perp} \ket{\psi}. \label{eq:Rpsi-varphi-perp}
    \end{align}
    By \cref{eq:alter-psi,eq:Rpsi-varphi,eq:Rpsi-varphi-perp}, we write the orthonormal basis $\cbra{R_{\varphi}R_{\psi}\ket{\varphi}, R_{\varphi}R_{\psi}\ket{\varphi^\perp}}$ of $R_{\varphi} R_{\psi} \mathcal{H}_{\varphi}$ as follows:
        \begin{align}
            R_{\varphi}R_{\psi}\ket{\varphi} & = R_{\varphi} \rbra*{ \ket{\varphi} - 2 \braket{\psi}{\varphi} \ket{\psi} } \\
            & = -\ket{\varphi} - 2 \braket{\psi}{\varphi} \ket{\psi} + 4 \abs*{\braket{\psi}{\varphi}}^2 \ket{\varphi} \\
            & = - \ket{\varphi} - 2 e^{-i\theta} \sin\rbra{\gamma} \ket{\psi} + 4 \sin^2\rbra{\gamma} \ket{\varphi} \\
            & = \rbra*{2\sin^2\rbra{\gamma} - 1} \ket{\varphi} - 2 e^{i\rbra*{\theta^\perp - \theta}} \sin\rbra{\gamma} \cos\rbra{\gamma} \ket{\varphi^\perp} \\
            & = -\cos\rbra{2\gamma} \ket{\varphi} - e^{i\rbra*{\theta^\perp - \theta}} \sin\rbra{2\gamma} \ket{\varphi^\perp}, \label{eq:RvarphiRpsi-varphi} \\
            R_{\varphi}R_{\psi}\ket{\varphi^\perp} 
            & = R_{\varphi} \rbra*{ \ket{\varphi^\perp} - 2 \braket{\psi}{\varphi^\perp} \ket{\psi} } \\
            & = \ket{\varphi^\perp} - 2 \braket{\psi}{\varphi^\perp} \ket{\psi} + 4 \braket{\varphi}{\psi} \braket{\psi}{\varphi^\perp} \ket{\varphi} \\
            & = \ket{\varphi^\perp} - 2 e^{-i\theta^\perp} \cos\rbra{\gamma} \ket{\psi} + 4 e^{i\rbra*{\theta-\theta^{\perp}}} \sin\rbra{\gamma} \cos\rbra{\gamma} \ket{\varphi} \\
            & = 2e^{i\rbra*{\theta-\theta^\perp}} \sin\rbra{\gamma} \cos\rbra{\gamma} \ket{\varphi} + \rbra*{1 - 2\cos^2\rbra{\gamma}} \ket{\varphi^\perp} \\
            & = e^{i\rbra*{\theta-\theta^\perp}} \sin\rbra{2\gamma} \ket{\varphi} - \cos\rbra{2\gamma} \ket{\varphi^\perp}. \label{eq:RvarphiRpsi-varphi-perp}
        \end{align}
        Combining the above, we can verify the following identity:
        \begin{align}
            R_{\varphi} R_{\psi} \ket{\Phi_{\pm}} 
            & = \frac{1}{\sqrt{2}} R_{\varphi} R_{\psi} \rbra*{ \ket{\varphi} \pm e^{i\rbra*{\theta^\perp - \theta + \frac{\pi}{2}}} \ket{\varphi^\perp} } \\
            & = \frac{1}{\sqrt{2}} \rbra*{ R_{\varphi}R_{\psi}\ket{\varphi} \pm e^{i\rbra*{\theta^\perp - \theta + \frac{\pi}{2}}} R_{\varphi}R_{\psi}\ket{\varphi^{\perp}} } \\
            & = \frac{1}{\sqrt{2}} \sbra*{ \rbra*{-\cos\rbra{2\gamma} \ket{\varphi} - e^{i\rbra*{\theta^\perp - \theta}} \sin\rbra{2\gamma} \ket{\varphi^\perp}} \pm \rbra*{i\sin\rbra{2\gamma}\ket{\varphi} - e^{i\rbra*{\theta^\perp - \theta + \frac{\pi}{2}}} \cos\rbra{2\gamma} \ket{\varphi^\perp}} } \\
            & = \frac{1}{\sqrt{2}} \rbra*{e^{i\rbra*{\pi \mp 2\gamma}}\ket{\varphi} \pm e^{i\rbra*{\theta^\perp - \theta + \frac{3}{2}\pi \mp 2\gamma}} \ket{\varphi^\perp}} \\
            & = e^{i\rbra*{\pi \mp 2\gamma}} \ket{\Phi_{\pm}}, 
        \end{align}
        where the third equality is due to \cref{eq:RvarphiRpsi-varphi,eq:RvarphiRpsi-varphi-perp}.
    \end{proof}

    \subsection{A meta-algorithm} \label{sec:meta-algo}

    We observe the identity 
    \begin{equation}
        \ket{\varphi} = \frac{1}{\sqrt{2}} \rbra[\big]{\ket{\Phi_+} + \ket{\Phi_-}},
    \end{equation}
    where $\ket{\Phi_{\pm}}$ are defined by \cref{eq:def-Phi-pm} and are eigenvectors of $R_{\varphi}R_{\psi}$ as shown in \cref{lemma:eigenvector-reflection}.
    We provide a meta-algorithm by performing the quantum phase estimation of $R_{\varphi}R_{\psi}$ on the quantum state $\ket{\varphi}$, as shown in \cref{fig:meta}.

    \begin{lemma} \label{lemma:meta}
        Suppose that $\varepsilon, \delta \in \rbra{0, 1}$. 
        Let $t = \Theta\rbra{\log\rbra{1/\varepsilon\delta}}$.
        After measuring the first $t$ qubits of $\mathsf{QPE}_{\varepsilon,\delta}^{R_{\varphi}R_{\psi}} \ket{0}^{\otimes t}\ket{\varphi}$ in the computational basis (followed by proper classical postprocessing), with probability at least $1 - \varepsilon$ we will obtain a real number $\tilde \gamma \in [0, 1)$ such that
        \begin{align}
            \abs[\bigg]{\abs*{\sin\rbra*{\pi \tilde \gamma - \frac{\pi}{2}}} - \mathrm{F}\rbra{\ket{\varphi}, \ket{\psi}}} & < \pi\delta, \label{eq:sin-F} \\
            \abs[\bigg]{\abs*{\cos\rbra*{\pi \tilde \gamma - \frac{\pi}{2}}} - \mathrm{T}\rbra{\ket{\varphi}, \ket{\psi}}} & < \pi\delta. \label{eq:cos-T}
        \end{align}
    \end{lemma}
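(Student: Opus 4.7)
The plan is to apply the quantum phase estimation guarantee of \cref{thm:qpe} to the eigenvector decomposition of $\ket{\varphi}$ provided by \cref{lemma:eigenvector-reflection}, and then translate the resulting phase estimate into estimates of $\mathrm{T}$ and $\mathrm{F}$ via elementary trigonometry. Concretely, \cref{lemma:eigenvector-reflection} gives $\ket{\varphi} = \tfrac{1}{\sqrt{2}}\rbra{\ket{\Phi_+} + \ket{\Phi_-}}$ with $R_{\varphi} R_{\psi} \ket{\Phi_\pm} = e^{i\rbra{\pi \mp 2\gamma}} \ket{\Phi_\pm}$, where $\gamma = \arcsin\rbra{\abs{\braket{\varphi}{\psi}}} \in [0, \pi/2)$ (the endpoint $\pi/2$ is excluded by the hypothesis $\abs{\braket{\varphi}{\psi}} \neq 1$). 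Rewriting each eigenvalue in the form $e^{i 2\pi \lambda_\pm}$ required by \cref{thm:qpe}, I get $\lambda_\pm = \tfrac{1}{2} \mp \tfrac{\gamma}{\pi} \pmod 1 \in [0,1)$. A direct computation then yields the identities $\abs{\sin\rbra{\pi\lambda_\pm - \pi/2}} = \abs{\sin\rbra{\mp\gamma}} = \sin\gamma = \mathrm{F}\rbra{\ket{\varphi}, \ket{\psi}}$ and $\abs{\cos\rbra{\pi\lambda_\pm - \pi/2}} = \cos\gamma = \mathrm{T}\rbra{\ket{\varphi}, \ket{\psi}}$, where the second equality on each line uses \cref{eq:tf}.

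Next, I would analyze the success probability by linearity. Running $\mathsf{QPE}^{R_{\varphi} R_{\psi}}_{\varepsilon, \delta}$ on $\ket{0}^{\otimes t}\ket{\varphi}$ produces the state $\tfrac{1}{\sqrt 2}\rbra{\ket{\tilde\phi_+}\ket{\Phi_+} + \ket{\tilde\phi_-}\ket{\Phi_-}}$, where each $\ket{\tilde\phi_\pm}$ is the phase-register state whose computational-basis measurement yields a circularly-$\delta$-close estimate of $\lambda_\pm$ with probability at least $1-\varepsilon$ by \cref{thm:qpe}. Since $\ket{\Phi_+}$ and $\ket{\Phi_-}$ are orthogonal, the phase-register measurement distribution decomposes as $\tfrac{1}{2}\Pr_+ + \tfrac{1}{2}\Pr_-$ with no cross terms, so with probability at least $1-\varepsilon$ the measured $\tilde\gamma$ satisfies $\min\cbra{\abs{\tilde\gamma - \lambda_s},\, 1 - \abs{\tilde\gamma - \lambda_s}} < \delta$ for some $s \in \cbra{+, -}$.

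To turn this estimate into the bounds \cref{eq:sin-F,eq:cos-T}, I would invoke Lipschitz continuity. Both $\abs{\sin\rbra{\pi \cdot - \pi/2}} = \abs{\cos\rbra{\pi \cdot}}$ and $\abs{\cos\rbra{\pi \cdot - \pi/2}} = \abs{\sin\rbra{\pi \cdot}}$ are $1$-periodic in their argument, so they are well-defined on the circle $\mathbb{R}/\mathbb{Z}$, and each is $\pi$-Lipschitz with respect to the circular distance (its derivative has magnitude at most $\pi$). Consequently, whenever $\tilde\gamma$ is circularly $\delta$-close to $\lambda_s$, both $\abs{\sin\rbra{\pi\tilde\gamma - \pi/2}}$ and $\abs{\cos\rbra{\pi\tilde\gamma - \pi/2}}$ differ from $\mathrm{F}\rbra{\ket{\varphi}, \ket{\psi}}$ and $\mathrm{T}\rbra{\ket{\varphi}, \ket{\psi}}$ by at most $\pi\delta$, completing the argument.

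The main subtlety I expect is handling the circular (wrap-around) nature of the QPE error. One must verify carefully that $\abs{\sin\rbra{\pi\tilde\gamma - \pi/2}}$ and $\abs{\cos\rbra{\pi\tilde\gamma - \pi/2}}$ are genuinely $1$-periodic in $\tilde\gamma$, so that the circular metric in \cref{thm:qpe} is the right one for the Lipschitz step, and that the Lipschitz constant $\pi$ transfers from the real line to $\mathbb{R}/\mathbb{Z}$ (using periodicity: the minimizing integer translate of $\tilde\gamma$ makes the straight-line and circular distances coincide). Once these observations are in place, the remaining ingredients---the QPE linearity argument and the trigonometric identities---are routine.
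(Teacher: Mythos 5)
Your proposal follows essentially the same route as the paper's proof: decompose $\ket{\varphi} = \tfrac{1}{\sqrt 2}\rbra{\ket{\Phi_+}+\ket{\Phi_-}}$ via \cref{lemma:eigenvector-reflection}, run \cref{thm:qpe} by linearity on the two orthogonal eigenvector branches so that the measured $\tilde\gamma$ is circularly $\delta$-close to one of $\lambda_\pm = \tfrac12 \mp \tfrac{\gamma}{\pi}$ with probability at least $1-\varepsilon$, and then convert the phase error into an error on $\mathrm{T}$ and $\mathrm{F}$. Your last step, packaged as $\pi$-Lipschitz continuity of $\abs{\cos\rbra{\pi\,\cdot\,}}$ and $\abs{\sin\rbra{\pi\,\cdot\,}}$ with respect to the circular metric, is a clean equivalent of the paper's explicit four-term minimum bound; both are correct, and yours is arguably tidier.

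There is, however, one genuine (though easily repaired) gap. You justify $\gamma \in [0,\pi/2)$ by appealing to ``the hypothesis $\abs{\braket{\varphi}{\psi}} \neq 1$,'' but \cref{lemma:meta} carries no such hypothesis --- that condition is a hypothesis of \cref{lemma:eigenvector-reflection} only. When $\ket{\psi} = e^{i\theta}\ket{\varphi}$, the vector $\ket{\varphi^{\perp}}$ is undefined (its normalization divides by zero) and the eigenbasis $\ket{\Phi_{\pm}}$ does not exist, so your argument does not cover this case. The paper disposes of it separately in two lines: there $R_{\varphi}R_{\psi} = I$, so $\ket{\varphi}$ is an exact eigenvector with eigenvalue $1$, the phase estimation outputs $\tilde\gamma = 0$ with certainty, and $\abs{\sin\rbra{\pi\tilde\gamma - \frac{\pi}{2}}} = 1 = \mathrm{F}\rbra{\ket{\varphi},\ket{\psi}}$ and $\abs{\cos\rbra{\pi\tilde\gamma - \frac{\pi}{2}}} = 0 = \mathrm{T}\rbra{\ket{\varphi},\ket{\psi}}$ hold exactly. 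Add this case split and your proof is complete.
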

    \begin{proof}
        We first consider the special case that $\ket{\varphi} = \ket{\psi}$ (up to a global phase), i.e., $\abs[\big]{\braket{\varphi}{\psi}} = 1$.
        In this case, $R_{\varphi} R_{\psi} = I$ is the identity operator and $\ket{\varphi}$ is an eigenvector of $R_{\varphi} R_{\psi}$ with eigenvalue $1$. 
        Thus the quantum phase estimation procedure of $\mathsf{QPE}_{\varepsilon,\delta}^{R_{\varphi}R_{\psi}} \ket{0}^{\otimes t}\ket{\varphi}$ will return $\tilde \gamma = 0$ with certainty. 
        By simple calculations, we can verify that $\mathrm{F}\rbra{\ket{\varphi}, \ket{\psi}} = 1 = \abs*{\sin\rbra*{\pi \tilde \gamma - \frac{\pi}{2}}}$ and $\mathrm{T}\rbra{\ket{\varphi}, \ket{\psi}} = 0 = \abs*{\cos\rbra*{\pi \tilde \gamma - \frac{\pi}{2}}}$.

        In the rest of this proof, we consider the general case that $\ket{\varphi} \neq \ket{\psi}$ (up to a global phase), i.e., $\abs[\big]{\braket{\varphi}{\psi}} \neq 1$.
        By \cref{lemma:eigenvector-reflection}, the eigenvalue of $\ket{\Phi_{\pm}}$ with respect to the unitary operator $R_{\varphi}R_{\psi}$ is $e^{i\rbra{\pi\mp2\gamma}}$, where $\gamma = \arcsin\rbra{\abs{\braket{\varphi}{\psi}}}$.
        By \cref{thm:qpe}, we have 
        \begin{align}
            \mathsf{QPE}_{\varepsilon,\delta}^{R_{\varphi}R_{\psi}} \ket{0}^{\otimes t}\ket{\varphi} & = \frac{1}{\sqrt{2}} \rbra*{ \mathsf{QPE}_{\varepsilon,\delta}^{R_{\varphi}R_{\psi}} \ket{0}^{\otimes t}\ket{\Phi_+} + \mathsf{QPE}_{\varepsilon,\delta}^{R_{\varphi}R_{\psi}} \ket{0}^{\otimes t}\ket{\Phi_-} } \\
            & = \frac{1}{\sqrt{2}} \rbra*{\ket{\tilde \gamma_+} \ket{\Phi_+} + \ket{\tilde \gamma_-} \ket{\Phi_-}},
        \end{align}
        where if we measure the first $t$ qubits, then with probability at least $1 - \varepsilon$ we will obtain a real number $\tilde \gamma \in [0, 1)$ satisfying at least one of the following two inequalities:
        \begin{align}
            \min \cbra*{\abs*{\tilde \gamma - \rbra*{\frac{1}{2}-\frac{\gamma}{\pi}}}, 1 - \abs*{\tilde \gamma - \rbra*{\frac{1}{2}-\frac{\gamma}{\pi}}}} < \delta, \label{eq:tildegamma-cond1} \\
            \min \cbra*{\abs*{\tilde \gamma - \rbra*{\frac{1}{2}+\frac{\gamma}{\pi}}}, 1 - \abs*{\tilde \gamma - \rbra*{\frac{1}{2}+\frac{\gamma}{\pi}}}} < \delta. \label{eq:tildegamma-cond2}
        \end{align}
        To see \cref{eq:sin-F}, we note that $\mathrm{F}\rbra{\ket{\varphi}, \ket{\psi}} = \sin\rbra{\gamma}$ and 
        \begin{align}
            \abs[\bigg]{\abs*{\sin\rbra*{\pi \tilde \gamma - \frac{\pi}{2}}} - \sin\rbra{\gamma}}
            & \leq \min\cbra*{ \abs*{\sin\rbra*{\pi \tilde \gamma - \frac{\pi}{2}} - \sin\rbra{\gamma}}, \abs*{\sin\rbra*{\pi \tilde \gamma + \frac{\pi}{2}} - \sin\rbra{\gamma}} } \\
            & \leq \min \Big\{ \abs*{\rbra*{\pi \tilde \gamma - \frac{\pi}{2}} - \gamma}, \pi - \abs*{\rbra*{\pi \tilde \gamma - \frac{\pi}{2}} - \gamma}, \nonumber \\
            & \qquad \qquad \abs*{\rbra*{\pi \tilde \gamma + \frac{\pi}{2}} - \gamma}, \pi - \abs*{{\rbra*{\pi \tilde \gamma + \frac{\pi}{2}} - \gamma}} \Big\} \\
            & < \pi \delta,
        \end{align}
        where the third inequality is because $\tilde \gamma$ satisfies either \cref{eq:tildegamma-cond1} or \cref{eq:tildegamma-cond2}.
        To see \cref{eq:cos-T}, we note that $\mathrm{T}\rbra{\ket{\varphi}, \ket{\psi}} = \cos\rbra{\gamma}$ and similarly we have
        \begin{align}
            \abs[\bigg]{\abs*{\cos\rbra*{\pi \tilde \gamma - \frac{\pi}{2}}} - \cos\rbra{\gamma}}
            & \leq \min\cbra*{ \abs*{\sin\rbra*{\pi \tilde \gamma} - \sin\rbra*{\gamma + \frac{\pi}{2}}}, \abs*{\sin\rbra*{\pi \tilde \gamma + \pi} - \sin\rbra*{\gamma + \frac{\pi}{2}}} } \\
            & \leq \min \Big\{ \abs*{\pi\tilde \gamma - \rbra*{\gamma + \frac{\pi}{2}}}, \pi - \abs*{\pi\tilde \gamma - \rbra*{\gamma + \frac{\pi}{2}}}, \nonumber \\
            & \qquad \qquad \abs*{ \rbra*{\pi \tilde \gamma + \pi} - \rbra*{\gamma + \frac{\pi}{2}} }, \pi - \abs*{ \rbra*{\pi \tilde \gamma + \pi} - \rbra*{\gamma + \frac{\pi}{2}} } \Big\} \\
            & < \pi \delta.
        \end{align}
    \end{proof}

    We restate \cref{lemma:meta} with normalized additive error in the following corollary, which will be used as a subroutine in \cref{sec:algo}. 

    \begin{corollary} \label{corollary:td-fi-est}
        Suppose that $\varepsilon, \delta \in \rbra{0, 1}$. Let $t = \Theta\rbra{\log\rbra{1/\varepsilon\delta}}$.
        We can estimate the trace distance and square root fidelity between $\ket{\varphi}$ and $\ket{\psi}$ to within additive error $\delta$ with probability at least $1  - \varepsilon$ by measuring the first $t$ qubits of
        $\mathsf{QPE}_{\varepsilon,\frac{\delta}{\pi}}^{R_{\varphi}R_{\psi}} \ket{0}^{\otimes t}\ket{\varphi}$ in the computational basis (followed by proper classical postprocessing).
    \end{corollary}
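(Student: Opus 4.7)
The plan is to obtain this as a direct rescaling of \cref{lemma:meta}. Concretely, I would invoke \cref{lemma:meta} with its precision parameter set to $\delta' \coloneqq \delta/\pi$ in place of $\delta$, while keeping the failure-probability parameter $\varepsilon$ unchanged. The choice $t = \Theta(\log(1/\varepsilon\delta))$ stated here is consistent with the choice $t = \Theta(\log(1/\varepsilon\delta'))$ required by \cref{lemma:meta} (via \cref{thm:qpe}), since absorbing the constant $\pi$ only changes hidden constants in the $\Theta$-notation.

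The steps would then be: (i) run $\mathsf{QPE}^{R_\varphi R_\psi}_{\varepsilon,\delta/\pi}$ on the input $\ket{0}^{\otimes t}\ket{\varphi}$; (ii) measure the first $t$ qubits in the computational basis to obtain some $\tilde\gamma \in [0,1)$; and (iii) as the classical postprocessing, output the pair
\[
    \widetilde{\mathrm{T}} \coloneqq \abs*{\cos\rbra*{\pi\tilde\gamma - \tfrac{\pi}{2}}}, \qquad \widetilde{\mathrm{F}} \coloneqq \abs*{\sin\rbra*{\pi\tilde\gamma - \tfrac{\pi}{2}}}
\]
as the estimates of $\mathrm{T}(\ket{\varphi},\ket{\psi})$ and $\mathrm{F}(\ket{\varphi},\ket{\psi})$, respectively. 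By \cref{lemma:meta} applied with parameter $\delta/\pi$, with probability at least $1-\varepsilon$ both estimates deviate from their targets by strictly less than $\pi\cdot(\delta/\pi) = \delta$, which is precisely the advertised additive error.

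There is essentially no obstacle: the whole content of the corollary is a unit conversion between the ``phase precision'' parameter used in the quantum phase estimation guarantee and the ``estimation error'' parameter in terms of trace distance and square root fidelity. The only point that deserves a line of explanation is that the classical postprocessing is well-defined regardless of which of the two eigenvalue branches (\cref{eq:tildegamma-cond1} or \cref{eq:tildegamma-cond2}) the measurement outcome corresponds to, because taking the absolute value of $\sin(\pi\tilde\gamma-\pi/2)$ and $\cos(\pi\tilde\gamma-\pi/2)$ already collapses both branches to the same value $\sin\gamma$ and $\cos\gamma$, as implicitly used inside the proof of \cref{lemma:meta}. Hence the corollary follows immediately.
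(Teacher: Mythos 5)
Your proposal is correct and coincides with what the paper intends: the corollary is stated without proof precisely because it is the substitution $\delta \mapsto \delta/\pi$ in \cref{lemma:meta}, with the factor $\pi$ absorbed into the $\Theta$-notation for $t$ and with the postprocessing outputs $\abs{\cos(\pi\tilde\gamma - \pi/2)}$ and $\abs{\sin(\pi\tilde\gamma - \pi/2)}$ exactly as you describe (and as used later in \cref{algo:td}). Your added remark that the absolute values collapse the two eigenvalue branches to the same estimates is a correct reading of the proof of \cref{lemma:meta}.
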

    
    \section{Samplizer for Pure States} \label{sec:samplizer}

    In this section, we define the samplizer for pure states and then provide an efficient implementation for it. 
    Roughly speaking,
    a samplizer converts a quantum query algorithm to a quantum sample algorithm,
    where the query oracle and the sample of quantum states are related.

    \subsection{Definition}

    In our case of pure states, we consider the reflection oracle for pure states, defined as follows. 

    \begin{definition} [Reflection oracle for pure states]
        Let $\ket{\psi}$ be a pure quantum state. 
        The reflection oracle for $\ket{\psi}$ is defined to be $R_\psi = I - 2\ketbra{\psi}{\psi}$. 
    \end{definition}

    We give the definition of (multi-)samplizer for pure states in terms of reflection oracles as follows. 
    
    \begin{definition} [Multi-samplizer for pure states] \label{def:samplizer-pure}
        A $k$-samplizer for $n$-qubit pure states, denoted as $\mathsf{Samplize}^{\mathsf{pure}}_{*}\ave{*}$, is a converter from a quantum circuit family to a quantum channel family such that:
        for any $\delta > 0$, quantum circuit family $\mathcal{A}^{U_1, U_2, \dots, U_k}$ with query access to $n$-qubit unitary operators $U_1, U_2, \dots, U_k$, and $n$-qubit pure states $\ket{\psi_1}, \ket{\psi_2}, \dots, \ket{\psi_k}$, 
        \begin{equation}
            \Abs*{ \mathsf{Samplize}^{\mathsf{pure}}_{\delta}\ave{\mathcal{A}^{U_1, U_2, \dots, U_k}}\sbra{\ket{\psi_1}, \ket{\psi_2}, \dots, \ket{\psi_k}} - \mathcal{A}^{R_{\psi_1}, R_{\psi_2}, \dots, R_{\psi_k}} }_\diamond \leq \delta.
        \end{equation}
        
        The sample complexity of $\mathsf{Samplize}^{\mathsf{pure}}_{*}\ave{*}$ is a $k$-tuple $\rbra{\mathsf{S}_1, \mathsf{S}_2, \dots, \mathsf{S}_k}$ for $\rbra{k+1}$-ary functions $\mathsf{S}_j\rbra{x_1, x_2, \dots, x_k; y}$ such that if $\mathcal{A}^{U_1, U_2, \dots, U_k}$ uses $Q_j$ queries to $U_j$ for each $1 \leq j \leq k$, then $\mathsf{Samplize}^{\mathsf{pure}}_{\delta}\ave{\mathcal{A}^{U_1, U_2, \dots, U_k}}\sbra{\ket{\psi_1}, \ket{\psi_2}, \dots, \ket{\psi_k}}$ uses $\mathsf{S}_j\rbra{Q_1, Q_2, \dots, Q_k; \delta}$ samples of $\ket{\psi_j}$ for each $1 \leq j \leq k$. 
        Especially when $k = 1$, we write $\mathsf{S}\rbra{x, y} \coloneqq \mathsf{S}_1\rbra{x; y}$ for convenience. 

        Similarly, we can also define the (additional) time complexity of $\mathsf{Samplize}^{\mathsf{pure}}_{*}\ave{*}$ as a $\rbra{k+1}$-ary function $\mathsf{T}\rbra{x_1, x_2, \dots, x_k; y}$ such that if $\mathcal{A}^{U_1, U_2, \dots, U_k}$ uses $Q_j$ queries to $U_j$ for each $1 \leq j \leq k$, then $\mathsf{Samplize}^{\mathsf{pure}}_{\delta}\ave{\mathcal{A}^{U_1, U_2, \dots, U_k}}\sbra{\ket{\psi_1}, \ket{\psi_2}, \dots, \ket{\psi_k}}$ can be implemented by using $T_{\mathcal{A}} + \mathsf{T}\rbra{Q_1, Q_2, \dots, Q_k; \delta}$ two-qubit gates, where $T_{\mathcal{A}}$ is the number of two-qubit gates in $\mathcal{A}^{U_1, U_2, \dots, U_k}$ (excluding queries). 
        Especially when $k = 1$, we write $\mathsf{T}\rbra{x, y} \coloneqq \mathsf{T}\rbra{x; y}$ for convenience. 
    \end{definition}

    Our algorithms for trace distance and fidelity estimation employ a $2$-samplizer for pure states, which will be formally described in \cref{sec:algo}. 
    Our definition of samplizer for pure states is inspired by the general samplizer defined in \cite{WZ24}, where the latter simulates quantum query algorithms with query access to block-encoding oracles for quantum states. 
    Here, the block-encoding oracle input model is standard in quantum algorithms, e.g., Hamiltonian simulation \cite{GSLW19} and solving systems of linear equations \cite{CAS+22}.
    It is worth noting that for pure states, the reflection oracle is computationally equivalent to the block-encoding oracle up to a constant factor (cf.\ \cite[Lemma 5.5]{CWZ24}).
    The relationship between the samplizer for pure states and the general samplizer is further discussed in \cref{sec:general-samplizer}. 
    
    \subsection{An efficient approach} \label{sec:approach-samplizer}

    We provide an efficient approach to the multi-samplizer for pure states as follows. 

    \begin{theorem} \label{thm:pure-state-samplizer}
        There is a $k$-samplizer for $n$-qubit pure states, $\mathsf{Samplize}^{\mathsf{pure}}_{*}\ave{*}$, with sample complexity $\rbra{\mathsf{S}_1, \mathsf{S}_2, \dots, \mathsf{S}_k}$, where 
        \begin{equation}
            \mathsf{S}_j\rbra{Q_1, Q_2, \dots, Q_k; \delta} = O\rbra*{\frac{Q_j}{\delta} \cdot \sum_{i=1}^k Q_i}.
        \end{equation} 
        Moreover, if a quantum query algorithm $\mathcal{A}^{U_1, U_2, \dots, U_k}$ uses $Q_j$ queries to $U_j$ for $1 \leq j \leq k$, then the (additional) time complexity (compared to the original $\mathcal{A}^{U_1, U_2, \dots, U_k}$) of  $\mathsf{Samplize}^{\mathsf{pure}}_{\delta}\ave{\mathcal{A}^{U_1, U_2, \dots, U_k}}$ is 
        \begin{equation}
            \mathsf{T}\rbra{Q_1, Q_2, \dots, Q_k; \delta} = O\rbra*{ \frac{n}{\delta} \rbra*{\sum_{j=1}^{k} Q_j}^2 }.
        \end{equation}
    \end{theorem}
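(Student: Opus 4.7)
The plan is to realize each reflection $R_{\psi_j} = I - 2\ketbra{\psi_j}{\psi_j}$ as a Hamiltonian-simulation unitary and then invoke density matrix exponentiation (\cref{thm:density-matrix-exp}) to simulate each oracle call from samples of the corresponding pure state. The starting observation is that since $\rho_j \coloneqq \ketbra{\psi_j}{\psi_j}$ is a rank-one projector with spectrum $\cbra{0,1}$, we have $e^{-i\pi\rho_j} = I - 2\rho_j = R_{\psi_j}$; equivalently, (controlled-)$R_{\psi_j}$ is exactly (controlled-)$e^{-i\rho_j t}$ at $t = \pi$, which is precisely the setting in which \cref{thm:density-matrix-exp} applies.

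With this identification in hand, I would let $Q \coloneqq \sum_{i=1}^{k} Q_i$ denote the total oracle-call count of $\mathcal{A}$, fix a per-query diamond-error budget $\delta' \coloneqq \delta/Q$, and replace each invocation of (controlled-)$R_{\psi_j}$ inside $\mathcal{A}$ by the DME implementation of (controlled-)$e^{-i\rho_j \pi}$ to precision $\delta'$, which by \cref{thm:density-matrix-exp} costs $\Theta\rbra{\pi^2/\delta'} = O\rbra{Q/\delta}$ fresh samples of $\ket{\psi_j}$ and $O\rbra{n/\delta'} = O\rbra{nQ/\delta}$ one- and two-qubit gates. Since substituting one sub-channel in a serial composition by a $\delta'$-close approximation shifts the end-to-end quantum channel by at most $\delta'$ in diamond norm, and conjugation by the fixed non-oracle unitaries of $\mathcal{A}$ is an isometry in diamond norm, iterating over all $Q$ substitutions yields
\begin{equation}
\Abs*{ \mathsf{Samplize}^{\mathsf{pure}}_{\delta}\ave{\mathcal{A}^{U_1,\dots,U_k}}\sbra{\ket{\psi_1},\dots,\ket{\psi_k}} - \mathcal{A}^{R_{\psi_1},\dots,R_{\psi_k}} }_\diamond \leq Q\cdot\delta' = \delta,
\end{equation}
which certifies the approximation requirement in \cref{def:samplizer-pure}.

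Aggregating resources across the $Q_j$ simulated queries to the $j$-th oracle then gives $S_j = O\rbra{Q_j \cdot Q/\delta} = O\rbra{(Q_j/\delta)\sum_i Q_i}$ samples of $\ket{\psi_j}$, matching the claimed sample complexity, while the DME gate overhead $O\rbra{nQ^2/\delta} = O\rbra{(n/\delta)\rbra{\sum_j Q_j}^2}$ added to the native $T$ gates of $\mathcal{A}$ produces the stated time complexity.

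The only real subtlety is verifying that the triangle-inequality/composition bound for diamond distance extends cleanly to a query algorithm in which queries to several different oracles $R_{\psi_j}$ are interleaved with arbitrary non-oracle circuitry and ancillas — but this is a routine application of the standard fact that swapping a single channel in a composition for a $\delta'$-close channel perturbs the whole composition by at most $\delta'$ in diamond norm, iterated $Q$ times. The mild side condition $\delta' \leq 1/6$ and $t \geq 6\pi\delta'$ from \cref{thm:density-matrix-exp} is automatic once $Q/\delta$ exceeds a constant, so it presents no genuine obstacle.
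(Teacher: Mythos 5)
Your proposal is correct and follows essentially the same route as the paper's proof: identify (controlled-)$R_{\psi_j}$ with (controlled-)$e^{\pm i\pi\ketbra{\psi_j}{\psi_j}}$, apply density matrix exponentiation with per-query precision $\delta/Q$ where $Q=\sum_i Q_i$, and accumulate the diamond-norm error additively over the $Q$ substituted oracle calls. Your additional remark checking the side conditions $\delta'\leq 1/6$ and $t\geq 6\pi\delta'$ of the DME theorem is a small point the paper leaves implicit.
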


    \begin{proof}
        Let $Q \coloneqq \sum_{i=1}^k Q_i$. 
        For each $1 \leq j \leq k$, by \cref{thm:density-matrix-exp} with $t \coloneqq \pi$ and $\delta \coloneqq \delta / Q$, we can use $O\rbra{Q/\delta}$ samples of $\ket{\psi_j}$ and $O\rbra{nQ/\delta}$ one- and two-qubit quantum gates to implement a quantum channel $\mathcal{E}_j$ that is $\rbra{\delta / Q}$-close (in diamond norm) to (the controlled version of) the following unitary transformation
        \begin{equation}
            e^{i \ketbra{\psi_j}{\psi_j} \pi} = I - 2\ketbra{\psi_j}{\psi_j} = R_{\psi_j}.
        \end{equation}
        That is, 
        \begin{equation}
            \Abs*{ \mathcal{E}_j - R_{\psi_j} }_{\diamond} \leq \frac{\delta}{Q}.
        \end{equation}
        Suppose that the quantum circuit $\mathcal{A}^{R_{\psi_1}, R_{\psi_2}, \dots, R_{\psi_k}}$ using queries to $R_{\psi_1}, R_{\psi_2}, \dots, R_{\psi_k}$ is composed of a sequence of unitary operators:
        \begin{equation}
            \mathcal{A}^{R_{\psi_1}, R_{\psi_2}, \dots, R_{\psi_k}} = G_Q \circ \mathcal{O}_Q \circ\dots \circ G_2 \circ \mathcal{O}_2 \circ G_1 \circ \mathcal{O}_1 \circ G_0,
        \end{equation}
        where $G_q$ for each $1 \leq q \leq Q$ can be implemented by one- and two-qubit quantum gates that do not depend on $R_{\psi_1}, R_{\psi_2}, \dots, R_{\psi_k}$, and $\mathcal{O}_q$ for each $1 \leq q \leq Q$ is (the controlled version of) $R_{\psi_{j_q}}$ for some $1 \leq j_q \leq k$. 
        Then, we can implement $\mathsf{Samplize}^{\mathsf{pure}}_{\delta}\ave[\big]{\mathcal{A}^{U_1, U_2, \dots, U_k}}\sbra[\big]{\ket{\psi_1}, \ket{\psi_2}, \dots, \ket{\psi_k}}$ as follows
        \begin{equation} \label{eq:def-samplizer-implement}
            \mathsf{Samplize}^{\mathsf{pure}}_{\delta}\ave[\big]{\mathcal{A}^{U_1, U_2, \dots, U_k}}\sbra[\big]{\ket{\psi_1}, \ket{\psi_2}, \dots, \ket{\psi_k}} \coloneqq G_Q \circ \mathcal{E}_{j_Q} \circ \dots \circ G_2 \circ \mathcal{E}_{j_2} \circ G_1 \circ \mathcal{E}_{j_1} \circ G_0.
        \end{equation}
        As $\Abs{\mathcal{E}_{j_q} - R_{\psi_{j_q}}}_\diamond \leq \delta/Q$ for each $1 \leq q \leq Q$, we conclude that 
        \begin{equation}
            \Abs*{\mathsf{Samplize}^{\mathsf{pure}}_{\delta}\ave[\big]{\mathcal{A}^{U_1, U_2, \dots, U_k}}\sbra[\big]{\ket{\psi_1}, \ket{\psi_2}, \dots, \ket{\psi_k}} - \mathcal{A}^{R_{\psi_1}, R_{\psi_2}, \dots, R_{\psi_k}}}_\diamond \leq \delta.
        \end{equation}
        Therefore, the implementation given in \cref{eq:def-samplizer-implement} is a valid $k$-samplizer for pure states. 

        We analyze the complexity of this implementation as follows. 
        Since there are $Q_j$ queries to $R_{\psi_j}$ among $\mathcal{O}_1, \mathcal{O}_2, \dots, \mathcal{O}_Q$, the number of samples of $\ket{\psi_j}$ used in the implementation defined by \cref{eq:def-samplizer-implement} is 
        \begin{equation}
            Q_j \cdot O\rbra*{\frac{Q}{\delta}} = O\rbra*{\frac{Q_jQ}{\delta}}.
        \end{equation}
        Moreover, the number of one- and two-qubit quantum gates used in this implementation in addition to (the implementation of) $G_1, G_2, \dots, G_Q$ is 
        \begin{equation}
            \sum_{i=1}^k Q_i \cdot O\rbra*{\frac{nQ}{\delta}} = O\rbra*{\frac{nQ^2}{\delta}}.
        \end{equation}
    \end{proof}

    \section{The Algorithm} \label{sec:algo}

    In this section, we present a complete description of our quantum algorithm for pure-state trace distance and fidelity estimations in \cref{algo:td}, which combines the estimation algorithm using reflection oracles in \cref{sec:est-refl} with the multi-samplizer for pure states in \cref{sec:samplizer}. 

    \begin{algorithm}[!htp]
        \caption{Quantum estimator for pure-state trace distance and fidelity estimations.}
        \label{algo:td}
        \begin{algorithmic}[1]
        \Require Sample access to two pure quantum states $\ket{\varphi}$ and $\ket{\psi}$; the desired additive error $\varepsilon \in \rbra{0, 1}$. 

        \Ensure $\varepsilon$-estimates of $\mathrm{T}\rbra{\ket{\varphi}, \ket{\psi}}$ and $\mathrm{F}\rbra{\ket{\varphi}, \ket{\psi}}$ with probability at least $2/3$.

        \State $t \coloneqq \Theta\rbra{\log\rbra{1/\varepsilon}}$.
        
        \State Let $\mathcal{A}^{U_1, U_2} \coloneqq \mathsf{QPE}^{U_1U_2}_{\frac{1}{10}, \frac{\varepsilon}{\pi}}$ denote the quantum circuit for phase estimation of $U_1U_2$.

        \State Prepare the quantum state $\rho \coloneqq \mathsf{Samplize}^{\mathsf{pure}}_{\frac{1}{10}}\ave[\big]{\mathcal{A}^{U_1, U_2}}\sbra[\big]{\ket{\varphi}, \ket{\psi}}\rbra[\big]{\ketbra{0}{0}^{\otimes t} \otimes \ketbra{\varphi}{\varphi}}$.

        \State Let $b_1, b_2, \dots, b_t \in \cbra{0, 1}$ be the outcome of measuring the first $t$ qubits of $\rho$ in the computational basis, and let $\tilde \gamma \coloneqq \sum_{j=1}^t b_j 2^{-j} \in [0, 1)$ be the estimate of the phase. 

        \State \Return $\abs{\cos\rbra{\pi \tilde \gamma - \frac{\pi}{2}}}$ and $\abs{\sin\rbra{\pi \tilde \gamma - \frac{\pi}{2}}}$ as the estimates of $\mathrm{T}\rbra{\ket{\varphi}, \ket{\psi}}$ and $\mathrm{F}\rbra{\ket{\varphi}, \ket{\psi}}$, respectively.
        \end{algorithmic}
    \end{algorithm}

    We state the result of our final algorithm in \cref{thm:algo-analysis}. 

    \begin{theorem} \label{thm:algo-analysis}
        For $\varepsilon \in \rbra{0, 1}$, we can estimate the trace distance and square root fidelity between $n$-qubit pure states $\ket{\varphi}$ and $\ket{\psi}$ to within additive error $\varepsilon$ with probability at least $2/3$ by measuring the first $t$ qubits of 
        \begin{equation} \label{eq:def-final-state}
            \mathsf{Samplize}^{\mathsf{pure}}_{\frac{1}{10}}\ave[\big]{\mathcal{A}^{U_1, U_2}}\sbra[\big]{\ket{\varphi}, \ket{\psi}}\rbra[\big]{\ketbra{0}{0}^{\otimes t} \otimes \ketbra{\varphi}{\varphi}}
        \end{equation}
        in the computational basis (followed by proper classical postprocessing), where $\mathcal{A}^{U_1, U_2} \coloneqq \mathsf{QPE}_{\frac{1}{10},\frac{\varepsilon}{\pi}}^{U_1U_2}$ and $t = \Theta\rbra{\log\rbra{1/\varepsilon}}$.
        Here, the quantum channel $\mathsf{Samplize}^{\mathsf{pure}}_{\frac{1}{10}}\ave[\big]{\mathcal{A}^{U_1, U_2}}\sbra[\big]{\ket{\varphi}, \ket{\psi}}$ can be implemented by using $O\rbra{1/\varepsilon^2}$ samples of $\ket{\varphi}$ and $\ket{\psi}$, and using $O\rbra{n/\varepsilon^2}$ two-qubit gates. 
    \end{theorem}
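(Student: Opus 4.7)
The plan is to combine the reflection-oracle meta-algorithm of \cref{corollary:td-fi-est} with the pure-state samplizer of \cref{thm:pure-state-samplizer}, and then control the two sources of error (phase-estimation failure and samplizer approximation) via a union bound.

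First, I would set the phase-estimation parameters. By \cref{corollary:td-fi-est} applied with failure probability $\tfrac{1}{10}$ and additive error $\varepsilon$, measuring the first $t = \Theta(\log(1/\varepsilon))$ qubits of $\mathsf{QPE}^{R_\varphi R_\psi}_{1/10,\,\varepsilon/\pi}\ket{0}^{\otimes t}\ket{\varphi}$ and applying the postprocessing in lines 4--5 of \cref{algo:td} yields $\varepsilon$-estimates of $\mathrm{T}(\ket{\varphi},\ket{\psi})$ and $\mathrm{F}(\ket{\varphi},\ket{\psi})$ with probability at least $\tfrac{9}{10}$. By \cref{thm:qpe} this circuit makes $O(1/\varepsilon)$ queries to controlled-$R_\varphi R_\psi$; writing $\mathcal{A}^{U_1,U_2} \coloneqq \mathsf{QPE}^{U_1U_2}_{1/10,\,\varepsilon/\pi}$, this means $\mathcal{A}^{U_1,U_2}$ uses $Q_1 = Q_2 = O(1/\varepsilon)$ queries to each oracle.

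Next, I would invoke \cref{thm:pure-state-samplizer} with $k=2$, $Q_1=Q_2 = O(1/\varepsilon)$, and precision parameter $\delta = \tfrac{1}{10}$. This gives a quantum channel
\[
    \mathcal{N} \coloneqq \mathsf{Samplize}^{\mathsf{pure}}_{1/10}\ave[\big]{\mathcal{A}^{U_1,U_2}}\sbra[\big]{\ket{\varphi},\ket{\psi}}
\]
satisfying $\Abs{\mathcal{N} - \mathcal{A}^{R_\varphi,R_\psi}}_\diamond \leq \tfrac{1}{10}$, and implementable with
\[
    O\!\left(\tfrac{Q_j}{\delta} \cdot (Q_1+Q_2)\right) = O\!\left(\tfrac{1/\varepsilon}{1} \cdot \tfrac{1}{\varepsilon}\right) = O(1/\varepsilon^2)
\]
samples of $\ket{\varphi}$ and of $\ket{\psi}$, matching the claimed sample complexity.

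Finally, I would combine the two error budgets. By the operational interpretation of the diamond norm, for any input state and any measurement (in particular the computational-basis measurement on the first $t$ qubits followed by the classical postprocessing in lines 4--5), the total variation distance between the output distribution produced by $\mathcal{N}$ on $\ketbra{0}{0}^{\otimes t}\otimes\ketbra{\varphi}{\varphi}$ and the output distribution produced by $\mathcal{A}^{R_\varphi,R_\psi}$ on the same input is at most $\tfrac{1}{10}$. Since the latter distribution places mass at least $\tfrac{9}{10}$ on ``good'' outcomes (those yielding $\varepsilon$-accurate estimates of both $\mathrm{T}$ and $\mathrm{F}$), the former places mass at least $\tfrac{9}{10} - \tfrac{1}{10} = \tfrac{8}{10} \geq \tfrac{2}{3}$ on good outcomes, as required.

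The only mild obstacle is the bookkeeping: one must verify that the $1/10$ diamond-norm error indeed translates into at most $1/10$ TV error on the concrete measurement used in \cref{algo:td}, which follows from the fact that composing with any measurement channel and tracing out unmeasured registers are both contractive operations with respect to the respective trace/TV distances. All other steps are direct substitutions into the two cited results.
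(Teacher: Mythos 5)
Your proposal is correct and follows essentially the same route as the paper's own proof: invoke \cref{corollary:td-fi-est} for the $9/10$-success reflection-oracle algorithm, apply \cref{thm:pure-state-samplizer} with $k=2$, $Q_1=Q_2=O(1/\varepsilon)$, $\delta=1/10$ for the $O(1/\varepsilon^2)$ sample count, and convert the $1/10$ diamond-norm error into a $1/10$ loss in success probability via contractivity of the measurement. The only detail the paper adds is the (asymptotically irrelevant) extra sample of $\ket{\varphi}$ consumed as the input register of the samplized channel.
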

    \begin{proof}
        The formal description of our algorithm is given in \cref{algo:td}. 
        Now we will prove its correctness and analyze its complexity. 

        \textbf{Correctness}.
        By \cref{corollary:td-fi-est}, we know that the trace distance and square root fidelity between $\ket{\varphi}$ and $\ket{\psi}$ can be obtained with success probability at least $\frac{9}{10}$ by measuring the first $t$ qubits of the state $\mathsf{QPE}^{R_{\varphi}R_{\psi}}_{\frac{1}{10},\frac{\varepsilon}{\pi}} \ket{0}^{\otimes t} \ket{\varphi}$ (followed by proper classical postprocessing). 
        Thus we only have to show that the quantum state defined by \cref{eq:def-final-state} is close to the pure quantum state $\mathsf{QPE}^{R_{\varphi}R_{\psi}}_{\frac{1}{10},\frac{\varepsilon}{\pi}} \ket{0}^{\otimes t} \ket{\varphi}$.
        To this end, by \cref{def:samplizer-pure}, we have
        \begin{equation}
            \Abs*{ \mathsf{Samplize}^{\mathsf{pure}}_{\frac{1}{10}}\ave[\big]{\mathcal{A}^{U_1, U_2}}\sbra[\big]{\ket{\varphi}, \ket{\psi}} -\mathsf{QPE}^{R_{\varphi}R_{\psi}}_{\frac{1}{10},\frac{\varepsilon}{\pi}} }_{\diamond} \leq \frac{1}{10}.
        \end{equation}
        By applying each of them on the quantum state $\Psi \coloneqq \ketbra{0}{0}^{\otimes t} \otimes \ketbra{\varphi}{\varphi}$, we have
        \begin{equation}
            \mathrm{\mathsf{T}}\rbra*{ \mathsf{Samplize}^{\mathsf{pure}}_{\frac{1}{10}}\ave[\big]{\mathcal{A}^{U_1, U_2}}\sbra[\big]{\ket{\varphi}, \ket{\psi}}\rbra[\big]{\Psi}, \mathsf{QPE}^{R_{\varphi}R_{\psi}}_{\frac{1}{10},\frac{\varepsilon}{\pi}} \Psi \rbra*{\mathsf{QPE}^{R_{\varphi}R_{\psi}}_{\frac{1}{10},\frac{\varepsilon}{\pi}}}^\dag } \leq \frac{1}{20}.
        \end{equation}
        Therefore, the probability distribution of the outcomes from measuring (in the computational basis) the state $\mathsf{Samplize}^{\mathsf{pure}}_{\frac{1}{10}}\ave[\big]{\mathcal{A}^{U_1, U_2}}\sbra[\big]{\ket{\varphi}, \ket{\psi}}\rbra[\big]{\Psi}$ is $\frac{1}{10}$-close (in total variation distance) to that from measuring the state $\mathsf{QPE}^{R_{\varphi}R_{\psi}}_{\frac{1}{10},\frac{\varepsilon}{\pi}} \ket{0}^{\otimes t} \ket{\varphi}$ (in the computational basis).
        This implies that the success probability of \cref{algo:td} is at least $\frac{9}{10} - \frac{1}{10} > \frac{2}{3}$. 

        \textbf{Complexity}. 
        By \cref{thm:qpe} with $\varepsilon \coloneqq \frac{1}{10}$ and $\delta \coloneqq \frac{\varepsilon}{\pi}$, the quantum query algorithm $\mathcal{A}^{U_1, U_2} \coloneqq \mathsf{QPE}_{\frac{1}{10},\frac{\varepsilon}{\pi}}^{U_1U_2}$ uses $O\rbra{1/\varepsilon}$ queries to each of $U_1$ and $U_2$, and uses $O\rbra{\log\rbra{1/\varepsilon}\log\log\rbra{1/\varepsilon}}$ two-qubit gates (due to the quantum Fourier transform).
        By the implementation of the multi-samplizer for pure states given in \cref{thm:pure-state-samplizer} with $\delta \coloneqq \frac{1}{10}$ and $Q_1 = Q_2 \coloneqq O\rbra{1/\varepsilon}$, we can implement the quantum channel $\mathsf{Samplize}^{\mathsf{pure}}_{\frac{1}{10}}\ave[\big]{\mathcal{A}^{U_1, U_2}}\sbra[\big]{\ket{\varphi}, \ket{\psi}}$ by using $O\rbra{1/\varepsilon^2}$ samples of each of $\ket{\varphi}$ and $\ket{\psi}$.
        In addition, we also use one sample of $\ket{\varphi}$ as (part of) the input of the quantum channel $\mathsf{Samplize}^{\mathsf{pure}}_{\frac{1}{10}}\ave[\big]{\mathcal{A}^{U_1, U_2}}\sbra[\big]{\ket{\varphi}, \ket{\psi}}$. 
        In total, we use $O\rbra{1/\varepsilon^2}$ samples of $\ket{\varphi}$ and $\ket{\psi}$. 
        Moreover, \cref{algo:td} can be implemented by using 
        \begin{equation}
            O\rbra*{\log\rbra*{\frac{1}{\varepsilon}}\log\log\rbra*{\frac{1}{\varepsilon}}} + O\rbra*{\frac{n}{\varepsilon^2}} = O\rbra*{\frac{n}{\varepsilon^2}}
        \end{equation}
        one- and two-qubit quantum gates, if $\ket{\varphi}$ and $\ket{\psi}$ are $n$-qubit pure quantum states.
    \end{proof}

    \section{Optimality} \label{sec:optimal}

    In this section, we first mention the matching lower bounds for pure-state trace distance and fidelity estimations in \cref{sec:lb-td-fi}, and then show a matching lower bound for the implementation of multi-samplizer for pure states in \cref{sec:lb-samplizer}.

    \subsection{Lower bounds for pure-state trace distance and fidelity estimations} \label{sec:lb-td-fi}

    For completeness, we collect the sample complexity lower bounds for estimating pure-state trace distance and fidelity in the following theorem. 

    \begin{theorem} [Sample complexity lower bounds for pure-state closeness estimation, adapted from \cite{ALL22,Wan24}] \label{thm:td-lb}
        Given sample access to two single-qubit pure quantum states $\ket{\varphi}$ and $\ket{\psi}$, it is necessary to use $\Omega\rbra{1/\varepsilon^2}$ samples of them to $\varepsilon$-estimate 
        \begin{enumerate}[(1)]
            \item the trace distance $\mathrm{T}\rbra{\ket{\varphi}, \ket{\psi}}$ for $\varepsilon \in \rbra{0, 1}$, \label{item:thm-sample-lb-1}
            \item the square root fidelity $\mathrm{F}\rbra{\ket{\varphi}, \ket{\psi}}$ for $\varepsilon \in \rbra{0, 1/4}$. \label{item:thm-sample-lb-3}
            \item the squared fidelity $\mathrm{F}^2\rbra{\ket{\varphi}, \ket{\psi}}$ for $\varepsilon \in \rbra{0, 1/2}$, \label{item:thm-sample-lb-2}
        \end{enumerate}
    \end{theorem}
    \begin{proof}
        Items \ref{item:thm-sample-lb-3} and \ref{item:thm-sample-lb-2} are directly implied by \cite[Lemma 13 in the full version]{ALL22}. 
        It remains to prove Item \ref{item:thm-sample-lb-1}. 

        The sample complexity lower bound for pure-state trace distance estimation was already shown in \cite[Theorem B.2]{Wan24} for $\varepsilon \in \rbra{0, 1/2}$. 
        Here, we strengthen the lower bound to encompass a more general case that $\varepsilon \in \rbra{0, 1}$ and both the states $\ket{\varphi}$ and $\ket{\psi}$ are single-qubit, with a simple proof. 
        To this end, we denote
        \begin{equation} \label{eq:def-psix}
            \ket{\psi_x} = \sqrt{1 - x^2} \ket{0} + x \ket{1}
        \end{equation}
        for $x \in \sbra{0, 1}$.
        Let $\varepsilon \in \rbra{0, 1}$.
        Consider the following quantum hypothesis testing problem: 
        \begin{itemize}
            \item Given an unknown pure quantum state $\ket{\varphi}$, determine whether $\ket{\varphi}$ is $\ket{\psi_0}$ or $\ket{\psi_\varepsilon}$, under the promise that $\ket{\varphi}$ is in either case with equal probability. 
        \end{itemize}
        The trace distance between $\ket{\psi_0}$ and $\ket{\psi_\varepsilon}$ is
        \begin{equation}
            \mathrm{T}\rbra*{\ket{\psi_0}, \ket{\psi_\varepsilon}} = \sqrt{1 - \abs*{\braket{\psi_0}{\psi_\varepsilon}}^2} = \varepsilon.
        \end{equation}
        By the Holevo-Helstrom bound for quantum state discrimination \cite{Hol73,Hel67}, we know that the above quantum hypothesis testing problem requires sample complexity $\Omega\rbra{1/\varepsilon^2}$. 
        On the other hand, any estimator for pure-state trace distance with additive error $\varepsilon$ can be used to solve this task, thereby also requiring sample complexity $\Omega\rbra{1/\varepsilon^2}$.
    \end{proof}

    \subsection{Lower bounds for (multi-)samplizers for pure states} \label{sec:lb-samplizer}

    It is trivial that for any $k \geq 1$, any $\rbra{k+1}$-samplizer for pure states can be used to implement a $k$-samplizer for pure states with the same sample complexity (by discarding the last pure state). 
    To show the optimality of the multi-samplizer for pure states given in \cref{thm:pure-state-samplizer}, it is sufficient to show the optimality of the $1$-samplizer, given as follows. 

    \begin{theorem} \label{thm:lower-bound-samplizer}
        For sufficiently large $Q$ and every $\delta \in \rbra{0, 1/9}$, any $1$-samplizer for $n$-qubit pure states requires sample complexity $\mathsf{S}\rbra{Q, \delta} = \Omega\rbra{Q^2/\delta}$. 
    \end{theorem}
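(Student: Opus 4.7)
The plan is to reduce from the $\Omega(1/\varepsilon^2)$ sample complexity lower bound for pure-state trace distance estimation (\cref{thm:td-lb}) by plugging a hypothetical $1$-samplizer into the meta-algorithm of \cref{sec:meta-algo}. I would fix a known reference state $\ket{\varphi_0}$ so that $R_{\varphi_0}$ is a free unitary gate, and instantiate the meta-algorithm as the single-oracle query algorithm $\mathcal{A}^U \coloneqq \mathsf{QPE}^{U R_{\varphi_0}}_{1/100,\,\varepsilon/\pi}$, which uses $Q = \Theta(1/\varepsilon)$ queries to $U$. By \cref{corollary:td-fi-est}, when $U = R_\psi$ its output encodes an $\varepsilon$-estimate of $\mathrm{T}(\ket{\varphi_0}, \ket{\psi})$ with failure probability at most $1/100$.

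Applying a hypothetical $1$-samplizer with precision $\delta < 1/9$ to $\mathcal{A}^U$ yields, by \cref{def:samplizer-pure}, a channel $\delta$-close in diamond norm to $\mathcal{A}^{R_\psi}$ and consuming $S(Q, \delta)$ samples of $\ket{\psi}$. Measuring its action on the canonical input $\ket{0}^{\otimes t}\ket{\varphi_0}$ (and postprocessing as in \cref{algo:td}) produces an $\varepsilon$-estimate of $\mathrm{T}(\ket{\varphi_0}, \ket{\psi})$ with success probability bounded away from $1/2$ by a constant. Specializing to the two-state family $\{\ket{\psi_0}, \ket{\psi_\varepsilon}\}$ from the proof of \cref{thm:td-lb} (taking $\ket{\varphi_0} \coloneqq \ket{\psi_0}$), this samplized estimator becomes a Holevo--Helstrom distinguisher using $S(Q, \delta)$ samples, forcing $S(Q, \delta) \geq \Omega(1/\varepsilon^2) = \Omega(Q^2)$.

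The main obstacle is recovering the remaining $1/\delta$ factor, since the black-box reduction above is insensitive to $\delta$ in the constant-failure regime. To close the gap, I expect the argument to couple the reduction's parameters to $\delta$ so that the samplizer's diamond-norm precision translates into a sharper distinguishing restriction --- for instance, by reducing from a harder variant of the two-state problem where the two hypotheses are separated by trace distance much less than $1/Q$ and the samplizer's $\delta$-budget directly determines the effective distinguishing resolution. Making this coupling rigorous, and in particular showing that no samplizer can evade the $1/\delta$ overhead by spreading its $S(Q, \delta)$ samples coherently across the $Q$ oracle calls, is the technical heart of the proof; following \cref{remark:samplizer}, the argument must use the pure-state trace distance lower bound rather than the Hamiltonian-simulation reduction employed in \cite{WZ24} for the mixed-state samplizer.
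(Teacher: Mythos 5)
Your first half is exactly the paper's argument: fix $\ket{\psi_0}=\ket{0}$ so that $R_{\psi_0}$ is a free gate, samplize the phase-estimation circuit $\mathsf{QPE}^{R_{\psi_0}U}$ with $Q=\Theta(1/\varepsilon)$ queries to $U$, and use the Holevo--Helstrom lower bound for distinguishing $\ket{\psi_0}$ from $\ket{\psi_\varepsilon}$ to conclude $S(Q,1/9)\geq c\,Q^2$. You also correctly diagnose that this black-box reduction cannot by itself see the $1/\delta$ factor. However, your proposed fix --- reducing from a harder two-state problem whose separation is coupled to $\delta$ --- is not what the paper does, and you leave it entirely unexecuted, so as written there is a genuine gap at precisely the step you call the technical heart.

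The paper closes the gap not by a harder reduction but by an elementary amortization argument using structural properties of the sample-complexity function $S$ itself. The key observation is subadditivity: to samplize a $(Q_1+Q_2)$-query algorithm to precision $\delta_1+\delta_2$ in diamond norm, one may samplize its first $Q_1$ queries to precision $\delta_1$ and its last $Q_2$ queries to precision $\delta_2$, whence
\begin{equation}
S\rbra{Q_1+Q_2,\,\delta_1+\delta_2}\leq S\rbra{Q_1,\delta_1}+S\rbra{Q_2,\delta_2}.
\end{equation}
This gives $S(mQ,m\delta)\leq m\,S(Q,\delta)$ for integers $m\geq 1$, together with monotonicity $S(Q,\delta_1)\geq S(Q,\delta_2)$ for $\delta_1<\delta_2$. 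For $0<\delta<1/9$ one then takes $m=\floor*{\frac{1}{9\delta}}$ and chains
\begin{equation}
S\rbra{Q,\delta}\geq \frac{1}{m}S\rbra{mQ,m\delta}\geq \frac{1}{m}S\rbra*{mQ,\tfrac{1}{9}}\geq \frac{1}{m}\,c\,(mQ)^2=c\,m\,Q^2=\Omega\rbra*{\frac{Q^2}{\delta}},
\end{equation}
where the constant-precision bound $S(mQ,1/9)\geq c(mQ)^2$ is exactly the part you already established. In other words, the $1/\delta$ overhead is forced because a hypothetical too-cheap small-$\delta$ samplizer could be concatenated $m$ times to produce a constant-$\delta$ samplizer for an $mQ$-query algorithm that beats the quadratic bound. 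You should replace your speculative coupling step with this scaling argument; without it (or an equivalent), the theorem as stated is not proved.
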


    Although the lower bound given in \cref{thm:lower-bound-samplizer} for the $1$-samplizer for pure states is of the same order as the lower bound $\Omega\rbra{Q^2/\delta}$ for the general samplizer given in \cite[Theorem III.4]{WZ24}, we emphasize that prior lower bounds do \textit{not} imply our lower bound in \cref{thm:lower-bound-samplizer}. 
    We discuss related lower bounds as follows. 
    \begin{enumerate}
        \item The lower bound $\Omega\rbra{Q^2/\delta}$ for general samplizer given in \cite[Theorem III.4]{WZ24} is by reducing from the sample-based Hamiltonian simulation \cite{KLL+17} via the optimal quantum query algorithm for Hamiltonian simulation \cite{GSLW19} for time $t = \Theta\rbra{Q}$.
        The sample-based Hamiltonian simulation is a task for implementing $e^{-i\rho t}$ with precision $\delta$ using samples of $\rho$. A lower bound $O\rbra{t^2/\delta}$ on the sample complexity of sample-based Hamiltonian simulation is given in \cite[Theorem 2]{KLL+17}, which is further reduced from distinguishing two mixed quantum states $\rho\rbra{\frac 1 2}$ and $\rho\rbra{\frac 1 2+\Theta\rbra{\frac 1 t}}$, where $\rho\rbra{x} = x\ketbra{0}{0} + \rbra{1-x}\ketbra{1}{1}$. 
        However, this reduction does not directly apply to the case of pure states.
        \item One may wonder what happens if we consider the sample-based Hamiltonian simulation for pure states. Note that the unitary operator $e^{-i\ketbra{\psi}{\psi}t}$ is periodic with respect to $t$ as $e^{-i\ketbra{\psi}{\psi}2\pi} = I$. 
        So the above idea will result in a lower bound of $\mathsf{S}\rbra{Q, \delta} = \Omega\rbra{1/\delta}$ for \mbox{$1$-samplizer} by taking $Q = \Theta\rbra{t} = \Theta\rbra{1}$, due to the lower bound $\Omega\rbra{1/\delta}$ given in \cite[Theorem 4]{KLL+17} for implementing $e^{-i\ketbra{\psi}{\psi}t}$ for constant $t$ with precision $\delta$.
    \end{enumerate}

    Interestingly, our proof of \cref{thm:lower-bound-samplizer} is based on a reduction from the pure-state trace distance estimation (whose lower bounds are previously presented in \cref{sec:lb-td-fi}) instead of the sample-based Hamiltonian simulation~\cite{KLL+17}.

    \begin{proof} [Proof of \cref{thm:lower-bound-samplizer}]
        Let $\mathcal{A}^{U_1, U_2}_\varepsilon \coloneqq \mathsf{QPE}_{\frac{1}{10},\frac{\varepsilon}{\pi}}^{U_1U_2}$.
        Then, by \cref{corollary:td-fi-est}, measuring the quantum state $\mathcal{A}^{R_\varphi, R_\psi}_\varepsilon \ket{0}^{\otimes t}\ket{\varphi}$ can estimate the trace distance between two pure quantum states $\ket{\varphi}$ and $\ket{\psi}$ to within additive error $\varepsilon$ with query complexity $O\rbra{1/\varepsilon}$, where $R_{\varphi}$ and $R_{\psi}$ are the reflection oracles for $\ket{\varphi}$ and $\ket{\psi}$, respectively.
        Consider the two pure quantum states $\ket{\psi_0} \coloneqq \ket{0}$ and $\ket{\psi_\varepsilon} \coloneqq \sqrt{1 - \varepsilon^2}\ket{0} + \varepsilon\ket{1}$ employed in \cref{thm:td-lb}, where $\varepsilon \in \rbra{0, 1}$.
        Note that the reflection oracle $R_{\psi_0} = I - 2\ketbra{0}{0}$ for $\ket{\psi_0}$ is constant (with respect to $\varepsilon$). 
        Then, measuring $\mathcal{A}^{R_{\psi_0}, R_{\psi_\varepsilon}}_\varepsilon \ket{0}$ (note that $\ket{\psi_0} = \ket{0}$) can estimate the trace distance between $\ket{\psi_0}$ and $\ket{\psi_\varepsilon}$ to within additive error $\varepsilon$ with probability at least $2/3$, using $O\rbra{1/\varepsilon}$ queries to $R_{\psi_\varepsilon}$. 
        Therefore, using the quantum query algorithm $\mathcal{A}^{U_1, U_2}_\varepsilon$ for pure-state trace distance estimation, we can obtain a quantum query algorithm $\mathcal{T}^{U}$ for distinguishing the two reflection operators $R_{\psi_0}$ and $R_{\psi_\varepsilon}$ with probability at least $2/3$, using $O\rbra{1/\varepsilon}$ queries to $U$, promised that either $U = R_{\psi_0}$ or $R_{\psi_\varepsilon}$. 
        Formally, the probability that $\mathcal{T}^U$ accepts is given by $\Abs{\Pi \mathcal{T}^U \ket{0}}^2$, where $\Pi = \ketbra{0}{0} \otimes I$ measures the first qubits.
        Then, 
        \begin{itemize}
            \item $\mathcal{T}^{R_{\psi_0}}$ accepts with probability at least $2/3$; 
            \item $\mathcal{T}^{R_{\psi_\varepsilon}}$ accepts with probability at most $1/3$.
        \end{itemize}

        Let $\mathsf{Samplize}^{\mathsf{pure}}_*\ave{*}$ be any $1$-samplizer with sample complexity $\mathsf{S}\rbra{Q, \delta}$. 
        Then, $\mathsf{Samplize}^{\mathsf{pure}}_{\frac{1}{9}} \ave{\mathcal{T}^{U}}$ is a quantum channel family with sample complexity $\mathsf{S}\rbra{\Theta\rbra{1/\varepsilon}, 1/9}$ such that 
        \begin{itemize}
            \item $\mathsf{Samplize}^{\mathsf{pure}}_{\frac{1}{9}} \ave[\big]{\mathcal{T}^{U}}\sbra[\big]{\ket{\psi_0}}\rbra[\big]{\ketbra{0}{0}}$ accepts with probability 
            \begin{equation}
            \tr\rbra*{\Pi \cdot \mathsf{Samplize}^{\mathsf{pure}}_{\frac{1}{9}} \ave[\big]{\mathcal{T}^{U}}\sbra[\big]{\ket{\psi_0}}\rbra[\big]{\ketbra{0}{0}}} \geq \frac{2}{3} - \frac{1}{9} = \frac{5}{9},
            \end{equation}
            \item $\mathsf{Samplize}^{\mathsf{pure}}_{\frac{1}{9}} \ave[\big]{\mathcal{T}^{U}}\sbra[\big]{\ket{\psi_\varepsilon}}\rbra[\big]{\ketbra{0}{0}}$ accepts with probability
            \begin{equation}
            \tr\rbra*{\Pi \cdot \mathsf{Samplize}^{\mathsf{pure}}_{\frac{1}{9}} \ave[\big]{\mathcal{T}^{U}}\sbra[\big]{\ket{\psi_\varepsilon}}\rbra[\big]{\ketbra{0}{0}}} \leq \frac{1}{3} + \frac{1}{9} = \frac{4}{9}.
            \end{equation}
        \end{itemize}
        Therefore, if $\ket{\varphi}$ is promised to be either $\ket{\psi_0}$ or $\ket{\psi_\varepsilon}$, then, by a constant number (say, $9$) of repetitions of $\mathsf{Samplize}^{\mathsf{pure}}_{\frac{1}{9}} \ave{\mathcal{T}^{U}}$ on input $\ket{\varphi}$, we can determine with probability at least $2/3$ which is the case (by majority voting).
        This means that we can distinguish $\ket{\psi_0}$ and $\ket{\psi_\varepsilon}$ with probability at least $2/3$ with sample complexity $9 \cdot \mathsf{S}\rbra{\Theta\rbra{1/\varepsilon}, 1/9}$. 
        On the other hand, using the same arguments as in \cref{thm:td-lb}, we know that distinguishing $\ket{\psi_0}$ and $\ket{\psi_\varepsilon}$ requires sample complexity $\Omega\rbra{1/\varepsilon^2}$.
        These together imply that $9 \cdot \mathsf{S}\rbra{\Theta\rbra{1/\varepsilon}, 1/9} \geq \Omega\rbra{1/\varepsilon^2}$.
        Since the choice of $\varepsilon \in (0, 1)$ is arbitrary, by letting $Q = \Theta\rbra{1/\varepsilon}$, we further have the relation
        \begin{equation} \label{eq:S-constant-delta}
            \mathsf{S}\rbra*{Q, \frac{1}{9}} \geq c \cdot Q^2
        \end{equation}
        for sufficiently large $Q$, where $c > 0$ is a universal constant.
        This is the desired inequality for the case of constant $\delta$. 

        To show the inequality for arbitrarily small $\delta$, we note that $\mathsf{S}\rbra{Q, \delta}$ satisfies:
        \begin{equation}
            \label{eq:pro-S}
            \mathsf{S}\rbra*{Q_1+Q_2,\delta_1+\delta_2}\leq \mathsf{S}\rbra*{Q_1,\delta_1}+\mathsf{S}\rbra*{Q_2,\delta_2}
        \end{equation}
        for any integers $Q_1,Q_2\geq 0$ and real numbers $\delta_0,\delta_1\in (0,1)$.
        This is because: to samplize any quantum query algorithm with $Q_1+Q_2$ queries to precision $\delta_1+\delta_2$, one can always first samplize the first part of $Q_1$ queries to precision $\delta_1$ and then the second part of $Q_2$ queries to precision $\delta_2$. 
        Consequently, we further have the following properties:
        \begin{enumerate}
            \item \label{enum:property-1} $\mathsf{S}\rbra{mQ, m\delta} \leq m \cdot \mathsf{S}\rbra{Q, \delta}$ for any integers $m, Q \geq 1$ and real number $\delta > 0$.
            \item \label{enum:property-2} $\mathsf{S}\rbra{Q, \delta_1} \geq \mathsf{S}\rbra{Q, \delta_2}$ for any integer $Q \geq 1$ and real numbers $0 < \delta_1 < \delta_2$.
        \end{enumerate}
        For sufficiently large $Q$ and for every $0 < \delta < 1/9$, by taking $m \coloneqq \floor*{\frac{1}{9\delta}} \geq 1$ (which gives $0 < m\delta \leq 1/9$), we have
        \begin{align}
            \mathsf{S}\rbra{Q, \delta} 
            & \geq \frac{1}{m} \cdot \mathsf{S}\rbra*{mQ, m\delta} \\
            & \geq \frac{1}{m} \cdot \mathsf{S}\rbra*{mQ, \frac{1}{9}} \\
            & \geq \frac{1}{m} \cdot c \cdot \rbra*{mQ}^2 \\
            & = c \cdot \floor*{\frac{1}{9\delta}} \cdot Q^2 \\
            & \geq \Omega\rbra*{\frac{Q^2}{\delta}},
        \end{align}
        where the first inequality is due to property \ref{enum:property-1}, the second inequality is due to property \ref{enum:property-2}, and the third inequality is by \cref{eq:S-constant-delta}.

        To complete the proof for general $n \geq 2$, the hard instance can be taken as $\ket{\Psi_0} = \ket{\psi_0} \otimes \ket{0}^{\otimes \rbra{n-1}}$ and $\ket{\Psi_\varepsilon} = \ket{\psi_\varepsilon} \otimes \ket{0}^{\otimes \rbra{n-1}}$. 
    \end{proof}

    As a corollary of \cref{thm:lower-bound-samplizer}, we have a matching sample lower bound for the implementation of $k$-samplizer for pure states for constant $k$. 

    \begin{theorem} \label{thm:lb-k-samplizer}
        For sufficient large $Q_j$ for each $1 \leq j \leq k$ and every $\delta \in \rbra{0, 1/9}$, any $k$-samplizer for pure states requires sample complexity 
        \begin{equation}
            \mathsf{S}_j\rbra{Q_1, Q_2, \dots, Q_k; \delta} = \Omega\rbra*{\frac{Q_j^2}{\delta}}
        \end{equation}
        for each $1 \leq j \leq k$. 
    \end{theorem}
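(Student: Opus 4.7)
The plan is to reduce to the already-proven $1$-samplizer lower bound in \cref{thm:lower-bound-samplizer}. Fix an index $j$. I will show that any $k$-samplizer $\mathsf{Samplize}^{\mathsf{pure}}_{*}\ave{*}$ with sample complexity function $S_j$ can be ``projected'' to a $1$-samplizer whose sample complexity, as a function of the single query budget $Q$, equals $Q \mapsto S_j(Q_1, \dots, Q_{j-1}, Q, Q_{j+1}, \dots, Q_k; \delta)$ with the remaining $Q_i$'s treated as fixed constants. Applying \cref{thm:lower-bound-samplizer} to this induced $1$-samplizer then yields $S_j(Q_1, \dots, Q_k; \delta) = \Omega(Q_j^2/\delta)$ for sufficiently large $Q_j$ and $\delta \in (0, 1/9)$.

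The concrete construction: given any $1$-oracle quantum query algorithm $\mathcal{B}^{U}$ using $Q = Q_j$ queries to $U$, I will build a $k$-oracle algorithm $\mathcal{A}^{U_1, \dots, U_k}$ that, on the main register, executes $\mathcal{B}^{U_j}$ (making $Q_j$ queries to the $j$-th oracle), and, on a separate auxiliary register initialized in $\ket{0}$ that is discarded at the end, performs exactly $Q_i$ applications of $U_i$ for each $i \neq j$. Then $\mathcal{A}$ uses exactly $Q_i$ queries to each $U_i$ as required. I will then invoke the $k$-samplizer with input states $\ket{\psi_i} := \ket{0}$ for $i \neq j$ and $\ket{\psi_j} := \ket{\psi}$ (the pure state being samplized). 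Since samples of the fixed state $\ket{0}$ can be prepared for free, the only samples that matter are the $S_j(Q_1, \dots, Q_k; \delta)$ samples of $\ket{\psi}$.

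It remains to verify correctness of the induced $1$-samplizer. By \cref{def:samplizer-pure}, the produced channel is $\delta$-close in diamond norm to $\mathcal{A}^{R_0, \dots, R_0, R_{\psi}, R_0, \dots, R_0}$, where each $R_0 = I - 2\ketbra{0}{0}$. On the auxiliary register, every oracle call acts by the fixed unitary $R_0$, which decouples completely from the main register; tracing out the auxiliary register thus leaves a channel $\delta$-close to $\mathcal{B}^{R_{\psi}}$, since the diamond norm is nonincreasing under partial trace. Hence this construction is a bona fide $1$-samplizer for $\mathcal{B}$ at precision $\delta$ whose sample complexity is exactly $S_j(Q_1, \dots, Q_k; \delta)$. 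Invoking \cref{thm:lower-bound-samplizer} (with $Q := Q_j$) gives the desired bound.

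The main (minor) technicality to handle carefully is the decoupling of the auxiliary register: the samplizer approximates $\mathcal{A}$'s circuit as a whole, so one must argue that the main/auxiliary split is preserved up to the same $\delta$ error. This follows because the diamond-norm bound is preserved under pre- and post-composition with fixed isometries (the preparation of $\ket{0}$ on the auxiliary register and the partial trace at the end), so no additional error accumulates. Once this is in place, the reduction is clean and the lower bound follows immediately from the $k=1$ case.
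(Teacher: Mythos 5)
Your proof is correct and follows the same route as the paper: the paper states \cref{thm:lb-k-samplizer} as an immediate corollary of the $1$-samplizer lower bound (\cref{thm:lower-bound-samplizer}), obtained by observing that a $k$-samplizer can simulate a samplizer with fewer oracles. Your dummy-query padding on an auxiliary register is in fact a more careful version of this reduction than the paper's one-line remark about ``discarding the last pure state,'' since it pins the lower bound to the actual argument tuple $\rbra{Q_1, \dots, Q_k}$ rather than to a tuple with the unused query counts set to zero.
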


    \section{Conclusion}

    In this paper, we proposed a quantum algorithm for estimating the trace distance and square root fidelity between pure quantum states with \textit{optimal} sample complexity, quadratically improving the long-standing folklore approach, answering the question raised in \cite{Wan24}, and thereby completing the complexity picture for pure-state closeness estimation. 
    Technically, our quantum algorithm requires new observations (the properties of the product of Householder reflections) and tools (the samplizer for pure states). 

    We therefore raise the following questions for future research. 

    \begin{enumerate}
        \item Can we close the gap between the upper and lower bounds on the quantum query/sample complexities of estimating the quantities of mixed quantum states such as trace distance and fidelity?
        \item Can we find more applications of the samplizer for pure states?
        \item After the distributed quantum algorithm for pure-state squared fidelity estimation in \cite{ALL22}, the trade-off between the sample complexity and quantum communication was recently settled in \cite{AS24,GHYZ24}. 
        A further question is: what is the sample complexity of estimating the square root fidelity and trace distance between pure states with limited quantum communication? \label{ques:3}
        \item Following Question \ref{ques:3}, can we implement a samplizer (for pure states) with distributed quantum computation or with limited quantum communication?
    \end{enumerate}

    \section*{Acknowledgment}

    The work of Qisheng Wang was supported by the Startup Funding from Shanghai Jiao Tong University.
    The work of Zhicheng Zhang was  supported in part by the Australian Research Council under Grant DP250102952.

    \addcontentsline{toc}{section}{References}
    
    \bibliographystyle{alphaurl}
    \bibliography{main}

    \appendix

    \section{Relation between Pure-State and General Samplizers} \label{sec:general-samplizer}

    The notion of block-encoding oracle is used in the general samplizer.

    \begin{definition} [Block-encoding oracle]
        Let $A$ be an operator with $\Abs{A} \leq 1$. 
        A unitary operator $U$ is said to be a block-encoding oracle for $A$, if $U$ is a $\rbra{2, a, 0}$-block-encoding of $A$ for some $a \geq 1$. 
    \end{definition}
    The block-encoding oracle is now a common quantum input model employed in various computational problems, e.g., Hamiltonian simulation \cite{GSLW19} and solving systems of linear equations \cite{CAS+22}. 
    It is known that for pure states, the block-encoding oracle is equivalent to the reflection oracle in query complexity. 

    Next, we recall the definition of the general samplizer. 

    \begin{definition} [General samplizer, {\cite[Definition I.1]{WZ24}}] \label{def:samplizer-general}
        A general samplizer $\mathsf{Samplize}_*\ave{*}$ is a converter from a quantum circuit family to a quantum channel family with the following property. 
        For any $\delta > 0$, quantum circuit family $\mathcal{A}^U$ with query access to unitary operator $U$, and mixed state $\rho$, there exists a block-encoding oracle $U_\rho$ for $\rho$ such that
        \begin{equation}
            \Abs{\mathsf{Samplize}_\delta\ave{\mathcal{A}^U}\sbra{\rho} - \mathcal{A}^{U_\rho}}_\diamond \leq \delta.
        \end{equation}
        The sample complexity of $\mathsf{Samplize}_*\ave{*}$ is a function $S\rbra{\cdot, \cdot}$ such that if $\mathcal{A}^U$ uses $Q$ queries to $U$, then $\mathsf{Samplize}_\delta\ave{\mathcal{A}^U}\sbra{\rho}$ uses $S\rbra{Q, \delta}$ samples of $\rho$. 
    \end{definition}

    In \cref{def:samplizer-pure}, we define the (multi-)samplizer for pure states in terms of the reflection oracle, while in \cref{def:samplizer-general}, the general samplizer is defined in terms of the block-encoding oracle. 
    This is for convenience of use and without loss of generality.
    Note that for pure states, the block-encoding oracle turns out to be equivalent to the reflection oracle (while we are not aware of any counterpart for mixed states). 

    \begin{fact} [cf.\ {\cite[Lemma 5.5]{CWZ24}}] \label{fact:equiv-reflect-block-encoding}
        The reflection oracle for $\ket{\psi}$ and the block-encoding oracle for $\ketbra{\psi}{\psi}$ are equivalent in query complexity up to a (universal) constant factor.
    \end{fact}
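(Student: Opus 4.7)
The plan is to exhibit explicit constant-query constructions in both directions, so that any algorithm using $Q$ queries to one oracle can be simulated using $O(Q)$ queries to the other.

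\textbf{From reflection to block-encoding.} I would use the identity $\ketbra{\psi}{\psi} = (I - R_\psi)/2$ and apply linear combination of unitaries (LCU). Concretely, with one ancilla qubit, define
\[
    V = (H \otimes I) \rbra*{ \ketbra{0}{0} \otimes I - \ketbra{1}{1} \otimes R_\psi } (H \otimes I).
\]
A direct calculation shows $(\bra{0} \otimes I) V (\ket{0} \otimes I) = (I - R_\psi)/2 = \ketbra{\psi}{\psi}$, so $V$ is a $(1,1,0)$-block-encoding of $\ketbra{\psi}{\psi}$, which can then be padded to a $(2, O(1), 0)$-block-encoding oracle in the sense of the definition. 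Only a single query to $R_\psi$ (and one controlled version thereof) is used.

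\textbf{From block-encoding to reflection.} Given a $(2, a, 0)$-block-encoding $U$ of $\ketbra{\psi}{\psi}$, the subnormalized block $\ketbra{\psi}{\psi}/2$ has singular values in $\cbra{0, 1/2}$ because $\ketbra{\psi}{\psi}$ is a rank-one projector. I would then apply quantum singular value transformation (QSVT) with a constant-degree real polynomial $p$ satisfying $p(0) = 1$, $p(1/2) = -1$, and $\abs{p(x)} \le 1$ on $[-1, 1]$. Such a polynomial exists since only two interpolation values are required. QSVT produces an exact $(1, a + O(1), 0)$-block-encoding of $p\rbra{\ketbra{\psi}{\psi}} = I - 2 \ketbra{\psi}{\psi} = R_\psi$ using $O(1)$ queries to $U$ and $U^\dagger$. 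Because $R_\psi$ is unitary, the unit-subnormalization block-encoding is leakage-free: it implements $R_\psi$ exactly on the principal register whenever the ancilla is prepared in (and post-selected on) $\ket{0}^{\otimes(a + O(1))}$.

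\textbf{Main obstacle.} The delicacy lies in the QSVT step: the naive degree-$1$ interpolant $p(x) = 1 - 4x$ violates $\abs{p(x)} \le 1$ outside $[0, 1/2]$, so one must use a slightly higher-degree polynomial (for instance a Chebyshev-type two-point interpolant with uniform bound). A cleaner alternative that sidesteps polynomial design is to apply LCU directly to $R_\psi = I - 4 \cdot \rbra{\ketbra{\psi}{\psi}/2}$, yielding a $(5, a+1, 0)$-block-encoding of $R_\psi$ with a single query to $U$, and then use oblivious amplitude amplification to reduce the subnormalization from $5$ to $1$. Since $R_\psi$ is unitary with all singular values equal to $1$, the amplification target is a constant, so $O(1)$ additional queries suffice. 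Either route yields the claimed equivalence up to a universal constant factor in query complexity.
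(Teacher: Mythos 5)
The paper does not actually prove this fact; it is imported wholesale from the cited reference (Lemma 5.5 of CWZ24), so there is no in-paper argument to compare against. Judged on its own, your two-direction construction is the standard one and is essentially correct: the LCU step $\ketbra{\psi}{\psi} = \rbra{I - R_\psi}/2$ with one controlled query gives a $\rbra{1,1,0}$-block-encoding that can be padded to match the paper's $\rbra{2,a,0}$ convention, and a constant-query singular value transformation of the block $\ketbra{\psi}{\psi}/2$ recovers $R_\psi$ exactly, after which unitarity of $R_\psi$ guarantees the ancillae return to $\ket{0}$ deterministically (no genuine postselection is needed).

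The one step you should tighten is the existence claim for the polynomial. Saying ``such a polynomial exists since only two interpolation values are required'' is not a valid justification: a polynomial with $\abs{p} \leq 1$ on $\sbra{-1,1}$ that attains the extreme values $+1$ at $0$ and $-1$ at $1/2$ must have critical points at both locations, so this is a constrained equioscillation problem, not free interpolation. Fortunately an explicit witness exists: $p\rbra{x} = -T_6\rbra{x} = -\rbra{32x^6 - 48x^4 + 18x^2 - 1}$ is even, satisfies $\abs{p\rbra{x}} \leq 1$ on $\sbra{-1,1}$, and has $p\rbra{0} = 1$ and $p\rbra{1/2} = -\cos\rbra{6\arccos\rbra{1/2}} = -1$, so the QSVT route goes through exactly with degree $6$. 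Your fallback via LCU plus oblivious amplitude amplification also works but has its own wrinkle: with subnormalization $5$ the amplification angle $\arcsin\rbra{1/5}$ is not of the form $\pi/\rbra{2\rbra{2k+1}}$, so exact OAA requires first padding the subnormalization to a compatible value (or using a robust variant with a controllable error). Finally, for the equivalence to be usable inside algorithms such as the phase estimation circuit in this paper, note that both of your simulations should be (and are) implementable in controlled form with $O(1)$ controlled queries; it is worth stating this explicitly since the reflection oracle is invoked in controlled form throughout.
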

    Therefore, the notion of $1$-samplizer for pure states is a special case of the general samplizer defined in \cite{WZ24}.
    In the following, we show how $1$-samplizer can be efficiently implemented by general samplizer.

    \begin{proposition} \label{prop:general-samplizer-to-1-samplizer}
        A general samplizer with sample complexity $S\rbra{Q, \delta}$ implies a $1$-samplizer for pure states with sample complexity $S\rbra{cQ, \delta}$ for some universal constant $c \geq 1$. 
    \end{proposition}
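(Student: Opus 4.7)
The plan is to build the $1$-samplizer for pure states directly on top of the given general samplizer, using \cref{fact:equiv-reflect-block-encoding} as the bridge between the two types of oracles. Concretely, for any quantum query algorithm $\mathcal{A}^U$ that uses $Q$ queries to a unitary oracle $U$ (intended to be instantiated as $R_\psi$), I would first construct a companion algorithm $\widetilde{\mathcal{A}}^V$ which treats $V$ as a block-encoding oracle for the rank-one density operator $\ketbra{\psi}{\psi}$. By \cref{fact:equiv-reflect-block-encoding}, there is a fixed gadget that implements the reflection $R_\psi = I - 2\ketbra{\psi}{\psi}$ exactly using $c$ queries to any block-encoding oracle for $\ketbra{\psi}{\psi}$, where $c$ is a universal constant. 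Replacing each use of $U$ in $\mathcal{A}$ by this gadget yields $\widetilde{\mathcal{A}}^V$, which uses $cQ$ queries to $V$ and satisfies the identity $\widetilde{\mathcal{A}}^{V} = \mathcal{A}^{R_\psi}$ whenever $V$ is any block-encoding oracle for $\ketbra{\psi}{\psi}$.

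Next I would feed $\widetilde{\mathcal{A}}$ into the general samplizer of \cref{def:samplizer-general}. By that definition, there exists some block-encoding oracle $U_\rho$ for $\rho \coloneqq \ketbra{\psi}{\psi}$ such that
\begin{equation}
    \Abs[\big]{\mathsf{Samplize}_{\delta}\ave[\big]{\widetilde{\mathcal{A}}^{V}}\sbra[\big]{\ketbra{\psi}{\psi}} - \widetilde{\mathcal{A}}^{U_\rho}}_\diamond \leq \delta,
\end{equation}
while using $S(cQ,\delta)$ samples of $\ketbra{\psi}{\psi}$ (equivalently, of $\ket{\psi}$). Since $\widetilde{\mathcal{A}}^{U_\rho} = \mathcal{A}^{R_\psi}$ by the construction in the previous paragraph, I can \emph{define} the resulting pure-state samplizer by
\begin{equation}
    \mathsf{Samplize}^{\mathsf{pure}}_{\delta}\ave[\big]{\mathcal{A}^{U}}\sbra[\big]{\ket{\psi}} \coloneqq \mathsf{Samplize}_{\delta}\ave[\big]{\widetilde{\mathcal{A}}^{V}}\sbra[\big]{\ketbra{\psi}{\psi}},
\end{equation}
which automatically satisfies the required diamond-norm bound $\leq \delta$ against $\mathcal{A}^{R_\psi}$ and inherits the sample complexity $S(cQ,\delta)$ claimed in the proposition.

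The only real subtlety, and the step I would check most carefully, is that the substitution used to form $\widetilde{\mathcal{A}}^V$ must realize $R_\psi$ \emph{exactly}, not merely approximately: if the gadget of \cref{fact:equiv-reflect-block-encoding} only achieved an approximate reflection, then the equality $\widetilde{\mathcal{A}}^{U_\rho} = \mathcal{A}^{R_\psi}$ would degrade into another diamond-norm inequality that would need to be absorbed into $\delta$, and possibly the samplizer would have to be called with a smaller precision. Since the standard construction (one ancilla qubit, a Hadamard-sandwiched controlled block-encoding implementing $e^{i\pi \ketbra{\psi}{\psi}}$ exactly) is exact up to a known global phase, this obstacle resolves cleanly; once this is verified, the existence of the universal constant $c$ in \cref{fact:equiv-reflect-block-encoding} directly produces the universal constant in the statement, and the proposition follows.
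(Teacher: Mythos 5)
Your proposal is correct and follows essentially the same route as the paper's proof: both convert $\mathcal{A}^{U}$ into a companion algorithm querying a block-encoding oracle via \cref{fact:equiv-reflect-block-encoding} (your $\widetilde{\mathcal{A}}^{V}$ is the paper's $\mathcal{B}^{U}$), apply the general samplizer to $\ketbra{\psi}{\psi}$, and define the pure-state samplizer by composition, inheriting sample complexity $S(cQ,\delta)$. Your added remark on the exactness of the reflection gadget is a reasonable sanity check but does not change the argument.
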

    \begin{proof}
        Let $\mathsf{Samplize}_*\ave{*}$ be a general samplizer with sample complexity $S\rbra{Q, \delta}$. 
        Let $\ket{\psi}$ be an $n$-qubit pure state and $\mathcal{A}^{V}$ be a quantum query algorithm using $Q$ queries to an $n$-qubit unitary oracle $V$.
        According to \cref{fact:equiv-reflect-block-encoding}, there is a quantum query algorithm $\mathcal{B}^{U}$ using $cQ$ queries to $U$ for some universal constant $c \geq 1$ such that for any block-encoding oracle $U_\psi$ for $\ketbra{\psi}{\psi}$, $\mathcal{A}^{R_{\psi}} = \mathcal{B}^{U_\psi}$.
        By applying the general samplizer $\mathsf{Samplize}_*\ave{*}$, for any $\delta > 0$, there exists a block-encoding oracle $U_\psi$ for $\ketbra{\psi}{\psi}$ such that
        \begin{equation}
            \Abs*{\mathsf{Samplize}_\delta\ave{\mathcal{B}^U}\sbra{\ketbra{\psi}{\psi}} - \mathcal{B}^{U_\psi}}_\diamond \leq \delta.
        \end{equation}
        Here, we note that $\mathsf{Samplize}_\delta\ave{\mathcal{B}^U}\sbra{\ketbra{\psi}{\psi}}$ uses $S\rbra{cQ, \delta}$ samples of $\ket{\psi}$. 
        Let $f \colon \mathcal{A}^{V} \mapsto \mathcal{B}^{U}$ be the mapping that is independent of $\ket{\psi}$. 
        Then,
        \begin{equation}
            \Abs*{\mathsf{Samplize}_\delta\ave{f\rbra{\mathcal{A}^{V}}}\sbra{\ketbra{\psi}{\psi}} - \mathcal{A}^{R_{\psi}}}_\diamond \leq \delta.
        \end{equation}
        Therefore, a $1$-samplizer can be formally given by
        \begin{equation}
            \mathsf{Samplize}^{\mathsf{pure}}_\delta\ave{\cdot}\sbra{\ket{\psi}} \coloneqq \mathsf{Samplize}_\delta\ave{f\rbra{\cdot}}\sbra{\ketbra{\psi}{\psi}},
        \end{equation}
        and its sample complexity is $S\rbra{cQ, \delta}$. 
    \end{proof}

    A general samplizer was proposed in \cite[Theorem III.1]{WZ24} with sample complexity $S\rbra{Q, \delta} = O\rbra*{\frac{Q^2}{\delta}\log^2\rbra*{\frac{Q}{\delta}}}$. 
    By \cref{prop:general-samplizer-to-1-samplizer}, we immediately obtain a $1$-samplizer for pure states with sample complexity $S\rbra{cQ, \delta} = O\rbra*{\frac{Q^2}{\delta}\log^2\rbra*{\frac{Q}{\delta}}}$. 
    In \cref{sec:approach-samplizer}, we provided a $1$-samplizer for pure states with better sample complexity, removing the logarithmic factors, and generalize it to multi-samplizer for pure states.

\end{document}